\title{Discounted Cuts: A Stackelberg Approach to Network Disruption}
\author{
  Pål Grønås Drange
  \and
  Fedor V. Fomin
  \and
  Petr Golovach
  \and
  Danil Sagunov\thanks{This author was supported by the Ministry of Economic Development of the Russian Federation (IGK 000000C313925
P4C0002), agreement No.\ 139-15-2025-010.}
}
\newtheorem{lemma}{Lemma}
\newtheorem{proposition}{Proposition}
\newtheorem{observation}{Observation}
\newtheorem{claim}{Claim}
\tikzstyle{path} = [color=black,opacity=.30,line cap=round, line join=round, line width=10pt]
\DeclareMathOperator{\dist}{dist}
\newcommand{\PQ}{\ensuremath{\mathsf{Q}}}
\newcommand{\mincheap}{\textsc{MinCut with $k$ Free Cheap Edges}\xspace}
\newcommand{\minstexp}{\textsc{Min $s$-$t$-Cut with $k$ Free Expensive Edges}\xspace}
\newcommand{\minstcheap}{\textsc{Min $s$-$t$-Cut with $k$ Free Cheap Edges}\xspace}
\newcommand{\minexp}{\textsc{MinCut with $k$ Free Expensive Edges}\xspace}
\newcommand{\minstcheapshort}{\textsc{Min $s$-$t$-Cut$-k$ cheap}\xspace}
\newcommand{\mincheapshort}{\textsc{MinCut$-k$ cheap}\xspace}
\newcommand{\minstexpshort}{\textsc{Min $s$-$t$-Cut$-k$ exp}\xspace}
\newcommand{\minexpshort}{\textsc{MinCut$-k$ exp}\xspace}
\newcommand{\maxcheapshort}{\textsc{MaxCut$-k$ cheap}\xspace}
\newcommand{\maxexpshort}{\textsc{MaxCut$-k$ exp}\xspace}
\newcommand{\budg}{\beta}
\newcommand{\minstcut}{\textsc{Min $s$-$t$-Cut}\xspace}
\newcommand{\mincut}{\textsc{MinCut}\xspace}
\newcommand{\maxcut}{\textsc{MaxCut}\xspace}
\newcommand{\costc}{{\sf Cost}^{\rm cheap}}
\newcommand{\coste}{{\sf Cost}^{\rm exp}}
\newcommand{\opt}{{\sf Opt}}
\newcommand{\NP}{\textsf{NP}}
\newcommand{\W}{\textsf{W}}
\newcommand{\Oh}{\mathcal{O}}
\newlength{\RoundedBoxWidth}
\newsavebox{\GrayRoundedBox}
\newenvironment{GrayBox}[1]{\setlength{\RoundedBoxWidth}{.43\textwidth}
    \def\boxheading{#1}
    \begin{lrbox}{\GrayRoundedBox}
       \begin{minipage}{\RoundedBoxWidth}}{   \end{minipage}
    \end{lrbox}
    \begin{center}
    \begin{tikzpicture}\node(Text)[draw=black!20,fill=white,rounded corners,inner sep=2ex,text width=\RoundedBoxWidth]{\usebox{\GrayRoundedBox}};
        \coordinate(x) at (current bounding box.north west);
        \node [draw=white,rectangle,inner sep=3pt,anchor=north west,fill=white]
        at ($(x)+(6pt,.75em)$) {\boxheading};
    \end{tikzpicture}
    \end{center}}
\newenvironment{defproblemx}[2][]{\noindent\ignorespaces \FrameSep=5pt\parindent=0pt\vspace*{-1.5em}
                \ifthenelse{\isempty{#1}}{\begin{GrayBox}{\textsc{#2}}}{\begin{GrayBox}{\textsc{#2}  parameterized by~{#1}}}
                \begin{tabular*}{\textwidth}{@{\hspace{.1em}} >{\itshape} p{1.8em} p{0.85\textwidth} @{}}}{
                \end{tabular*}\end{GrayBox}\ignorespacesafterend
            }
\definecolor{mdred}{rgb}{0.55, 0.0, 0.0}
\definecolor{mdgreen}{rgb}{0.0, 0.42, 0.24}
\definecolor{mdblue}{rgb}{0.02, 0.05, 0.6}
\definecolor{mdyellow}{rgb}{0.8, 0.6, 0.2}
\begin{document}

\maketitle

\begin{abstract}
We study a Stackelberg variant of the classical Most Vital Links problem, modeled as a one-round adversarial game between an attacker and a defender. The attacker strategically removes up to $k$ edges from a flow network to maximally disrupt flow between a source $s$ and a sink $t$, after which the defender optimally reroutes the remaining flow.
To capture this attacker--defender interaction, we introduce a new mathematical model of
\emph{discounted cuts}, in which the cost of a cut is evaluated by excluding its $k$ most expensive edges. This model generalizes the Most Vital Links problem and uncovers novel algorithmic and complexity-theoretic properties.

We develop a unified algorithmic framework for analyzing various forms of discounted cut problems, including minimizing or maximizing the cost of a cut under discount mechanisms that exclude either the $k$ most expensive or the $k$ cheapest edges. While most variants are \NP-complete on general graphs, our main result establishes polynomial-time solvability for all discounted cut problems in our framework when the input is restricted to bounded-genus graphs, a relevant class that includes many real-world networks such as transportation and infrastructure networks.
With this work, we aim to open collaborative bridges between artificial intelligence, algorithmic game theory, and operations research.
\end{abstract}

\section{Introduction}
\label{sec:intro}

Consider the following scenario of securing a network: An
attacker attempts to escalate privileges from server~$s$ to a critical
database server~$t$.  The security team has a limited budget~$\beta$ to
deploy firewalls or physically disconnect cables.  Some links are very
expensive to disable---typically those that are main trunk lines
carrying heavy traffic.  However, the company's cybersecurity insurance
allows the team to discount the cost of disabling up to~$k$ links,
either by covering the most expensive ones or, alternatively, the least
expensive.

As another example, consider a multi-agent system where agents are connected via  conflict relationships. We aim to divide the agents into two teams or coalitions in a way that maximizes inter-group tension. When agents are represented as nodes and interaction strengths as weighted edges, this leads to a natural \textsc{MaxCut} formulation: the goal is to partition the graph to maximize the total weight of edges crossing the cut.
However, in many realistic scenarios, some interactions may be negligible or unreliable---such as low-trust links, wrong information, or edges that can be manipulated by a bounded-perception attacker. In such cases, we would be more interested in solving the \emph{discounted maximum cut}, where the objective function ignores the $k$ weakest (i.e., lowest-weight) edges in the cut and thus to focus on strong, strategically meaningful interactions, while discounting minor or noisy ones.

This gives rise to a Stackelberg variant of the classic
cut problems, where a leader chooses a cut and a
follower modifies the cost of up to~$k$ edges post hoc.  Depending on
whether the follower discounts the most or the least expensive edges, the
effective cost of the cut---and thus the optimal strategy---changes
significantly.  See \Cref{fig:flow-network-plain-weights} for an example.

Although the discount variants of cut problems appear closely related to the standard cut
problems and to classical interdiction models such as the \textsc{Most
  Vital Links} problem (introduced by \citet{wollmer1963methodsdetermining,wollmer1964removingarcs}), they
give rise to novel algorithmic questions.
In particular, the cost function becomes non-additive and adversarially dependent on the edge weights, breaking standard structural properties of cuts.

\begin{figure}[h!]
    \centering
    \begin{tikzpicture}[
        vertex/.style={circle, draw, minimum size=0.5cm},
        font=\small
    ]
        \node[vertex] (a) at (2,1) {$a$};
        \node[vertex] (c) at (2,-1) {$c$};
        \node[vertex] (s) at (0,0) {$s$};

        \node[vertex] (b) at (4,1) {$b$};
        \node[vertex] (d) at (4,-1) {$d$};
        \node[vertex] (t) at (6,0) {$t$};

        \draw[mdblue, very thick] (s) -- (a) node[midway, above left=2pt, black] {\scriptsize 3};
        \draw[mdred, very thick] (a) -- (b) node[midway, above=2pt, black] {\scriptsize 1};
        \draw (b) -- (t) node[midway, above right=2pt] {\scriptsize 4};

        \draw (a) -- (c) node[midway, right=2pt] {\scriptsize 2};

        \draw[mdblue, very thick] (s) -- (c) node[midway, below left=2pt, black] {\scriptsize 3};
        \draw[mdred, very thick] (c) -- (d) node[midway, below=2pt, black] {\scriptsize 5};
        \draw (d) -- (t) node[midway, below right=2pt] {\scriptsize 3};

        \draw (b) -- (d) node[midway, right=2pt] {\scriptsize 3};

        \draw[mdred, very thick] (a) -- (d) node[midway, right=2pt, black] {\scriptsize 1}; 
    \end{tikzpicture}
    \caption{A minimum $s$-$t$-cut in this example is $\{sa, sc\}$ (blue) of cost $6$. For $k=1$, the minimum discounted $s$-$t$-cut is $\{ab, ad, cd\}$ (red), with discounted cost~$2$.
    This corresponds to the edge $cd$ being the \emph{most vital link}, i.e., the removal of~$cd$ leads to the largest decrease in maximum flow.
    }
    \label{fig:flow-network-plain-weights}
\end{figure}
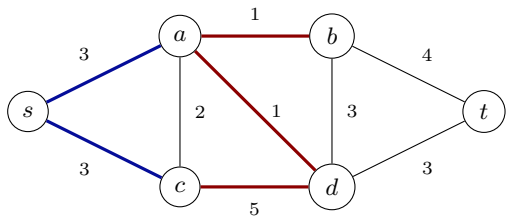

\subsection{Cuts with $k$ Free Edges}

In this paper, we introduce and study several variants of the minimum and maximum cut problems in which the cost of some edges in the cut is discounted.
Specifically, we consider two discount mechanisms: one in which we do not pay for the $k$ most expensive edges, and another in which we do not pay for the $k$ cheapest edges.
These correspond to the two natural game strategies of always discounting the most expensive, and most cheap edges.
We present these discounted-cost cut problems as formal models of network vulnerability and adversarial flow inhibition, providing a suite of complexity and algorithmic results with a particular focus on tractable cases in graphs of bounded genus.
Our findings advance the algorithmic understanding of partial cuts under adversarial or cooperative discounting, opening new directions in network design and interdiction.

Let $S$ be an ordered list of numbers, ordered from smallest to largest.  We define
two discounted cost functions
\[
\coste_k(S)= \sum_i^k S_i \quad \text{and} \quad \costc_k(S)= \sum_{n-k}^n S_i,
\]

\noindent
i.e., $\coste_k$ (resp.\ $\costc_k$) is the sum of the elements in $S$ \emph{except} the $k$ most (resp.\ least) expensive elements.  In this work we will be talking mostly about cuts, so we define the following.
Given a cut $(A,B)$ of a graph $G$ with an (edge) cost function
$c\colon E(G)\rightarrow \mathbb{Z}_{\geq 0}$ and an integer $k\geq 0$,
we define the \emph{discounted cost} of the cut $(A,B)$
as
\begin{align*}
\coste_k(A,B)&=c(E(A,B))-c(R)\\ &=\sum_{e\in E(A,B)} c(e)-\sum_{e\in R}
  c(e),
  \end{align*}
  where $R$ is the set of $k$ most expensive edges of the edge
cut set $E(A,B)$.  (We assume $\coste_k(A,B)=0$ whenever $|E(A,B)|\leq k$.)
We consider the following variant of the classic \textsc{MinCut}
problem
\minstexp:
\emph{Does there exist an $s$-$t$-cut $(A,B)$ of $G$ of discounted
  cost $\coste_k(A,B)\leq \beta$, given  two terminal vertices
  $s$ and $t$, an integer $k\geq 0$, and the budget $\beta\geq 0$.}
Let us note that for $k=0$, \minstexpshort is the problem of finding a
minimum $s$-$t$-cut.

A close well-studied relative of the minimum $s$-$t$-cut problem is the
\textsc{Global Min-cut}, or the edge connectivity of a graph.  Here the
task is to identify the set of edges of minimum weight in a connected
graph whose removal disconnects the graph.  We define the discounted
version of \textsc{Global Min-cut} as follows, \minexp:
Does there exist a cut $(A,B)$ of $G$ such that $\coste_k(A,B)\leq \budg$, given
an integer $k\geq 0$,  and the budget $\budg\geq 0$.
The problem \mincheap, or \mincheapshort for short, is defined similarly.

\bigskip

Another important cut problem is the \textsc{Maximum Cut} problem, where the goal is to partition a graph to maximize the total weight of edges crossing the cut. This problem is significantly more challenging than the corresponding minimization variants. Analogously, we define four versions of the \textsc{Maximum Cut} problem based on the presence or absence of terminals $s$ and $t$, and the choice of discount function (excluding either cheap or expensive edges).
In total, we thus obtain eight distinct discounted cut problems (see \Cref{table:results}).

\subsection{Our Results and Methods}
In this section, we provide an overview of our main results.  A central
theme in our work is the use of a series of reductions that transform
various discounted cut problems into their classic counterparts,
enabling the application of classical techniques in new, cost-aware
settings.  Building on this, we provide complexity classification for
all variants of the discounted cut problem.  Together, these results
offer a comprehensive view of the computational landscape for discounted
cuts and reveal several surprising gaps in complexity between closely
related problems.  We summarize our results in \Cref{table:results}.

\begin{table*}[h]
  \centering
  \scriptsize
    \renewcommand{\arraystretch}{1.3} \setlength{\tabcolsep}{8pt}

    \begin{tabular}{l l l l l}
& \multicolumn{2}{c}{$s$-$t$-cut} & \multicolumn{2}{c}{Global cut} \\
        \cmidrule(r){2-3} \cmidrule(l){4-5}
        & General & Planar & General & Planar \\
        \midrule
        \textsc{Min Exp.}   & W[1]-hard (Thm~\ref{thm:w-one-hard}) & P (Thm~\ref{thm:planar-vital}) & P (Thm~\ref{thm:global_mincut_p}) & P (Thm~\ref{thm:global_mincut_p}) \\
        \textsc{Max Cheap}  & paraNP-hard & P \textit{(i)} (Thm~\ref{thm:global_genus}) & paraNP-hard & P \textit{(i)} (Thm~\ref{thm:global_genus}) \\
        \textsc{Min Cheap}  & P (\Cref{thm:gen-alg}) & P (Thm~\ref{thm:gen-alg}) & P (Thm~\ref{thm:gen-alg}) & P (Thm~\ref{thm:gen-alg}) \\
        \textsc{Max Exp.}   & paraNP-hard & P \textit{(i)} (Thm~\ref{thm:global_genus}) & paraNP-hard & P (Thm~\ref{thm:gen-alg}) \\
        \bottomrule
    \end{tabular}
    \caption{Complexity classification of eight discounted-cut problems (minimum/maximum cut, $s$-$t$/global cut, $k$~cheap/$k$~expensive discounting), each studied on general and planar (bounded-genus) graphs, resulting in 16 complexity entries. The complexity is parameterized by $k$; entries marked ``$(i)$'' indicate tractability results known only for polynomial costs.}
    \label{table:results}
\end{table*}

Surprisingly, despite the long history of \minstexpshort\footnote{\minstexpshort{} is also known as \textsc{Most Vital Links}.}, we have not
found an \textsf{NP}-hardness proof in the literature.  The only known
complexity lower bound is an \textsf{NP}-hardness result for a more
general \emph{network interdiction} model, in which each edge $e$ has a
resource cost $r(e)$.  The goal in this model is to reduce the capacity
of every $s$-$t$-cut by deleting a set of edges whose total resource cost
does not exceed a threshold $R$.  In this setting, \minstexpshort is a
special case where $r(e) = 1$ for every edge of $G$ and $R = k$.  A
simple reduction from \textsc{Knapsack} shows that this more general
interdiction problem is \emph{weakly}
\textsf{NP}-complete~\cite{Ball1989vital,wood1993deterministicnetwork}.
Our first result significantly strengthens the known intractability
bounds by establishing \textsf{NP}-completeness of \minstexpshort when
edge costs are restricted to one and two.

\begin{restatable}{theorem}{thmhardness}
 \label{thm:w-one-hard}
 \minstexpshort is \textsf{NP}-complete for instances where each edge
 has cost one or two.  Moreover, the problem is \textsf{W[1]}-hard
 parameterized by $k$ for instances with integer edge costs upper
 bounded by a polynomial of the size of the input graph.
\end{restatable}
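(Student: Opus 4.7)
My plan is to handle the two hardness claims separately, both via polynomial-time reductions from well-known graph problems. Membership in \textsf{NP} is immediate: a candidate cut $(A,B)$ is verified in polynomial time by sorting $E(A,B)$ by cost and evaluating $\coste_k(A,B)$.

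For the strong \textsf{NP}-hardness with cost alphabet $\{1,2\}$, I would reduce from a strongly \textsf{NP}-hard ``select $\ell$ elements'' problem such as \textsc{Clique} or \textsc{Densest-$\ell$-Subgraph}. Given an instance $(H,\ell)$, the construction would produce a layered flow network with a vertex $x_v$ for each $v\in V(H)$ and a vertex $y_e$ for each $e\in E(H)$, with $s$ attached to the $x_v$'s and $t$ attached to the $y_e$'s by cost-$1$ connector bundles, and with coupling edges $x_v y_e$ (for $v \in e$) of cost $2$. The parameter $k$ would be set to $\binom{\ell}{2}$ and the budget $\beta$ calibrated so that the only way to meet it is to place exactly $\ell$ vertex-layer nodes on the $s$-side together with $\binom{\ell}{2}$ edge-layer nodes on the $s$-side, in a way that each of those edges of $H$ has both of its endpoints among the chosen vertices. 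In other words, the $k$ discounted cost-$2$ edges encode an $\ell$-clique in $H$, while the cost-$1$ residual matches $\beta$ iff such a clique exists. Any integer capacity larger than $2$ needed for a ``rigidity'' gadget can be realised by parallel cost-$1$ or cost-$2$ edges, keeping the cost alphabet at $\{1,2\}$.

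For the \textsf{W}[1]-hardness parameterized by $k$ with polynomial weights, I would switch to a parameterized reduction from \textsc{Multicolored Clique} parameterized by the number of colour classes $\ell$. Access to polynomially bounded integer costs lets me use weight strata (for example, distinct powers of a small polynomial in $n$) to force the top-$k$ discount to pick exactly one ``selector'' edge per colour class, with the budget $\beta$ then encoding the pairwise adjacency constraints between chosen representatives. Choosing the target parameter as $k = \ell + \binom{\ell}{2}$ (or similar function of $\ell$ alone) makes the reduction parameter-preserving, so \textsf{W}[1]-hardness transfers to \minstexpshort.

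The main obstacle in both reductions is ensuring that the adversarial top-$k$ discount is forced to pick up precisely the edges that encode a valid source solution, and cannot be ``gamed'' by an unforeseen cut shape. In the polynomial-weight setting this is routine, since large weight gaps steer the discount unambiguously. In the $\{1,2\}$ regime, however, weights alone cannot distinguish ``correct'' cost-$2$ edges from ``incorrect'' ones, and the argument has to rely on cut structure: any cut deviating from the intended clique encoding must either fail to contain $k$ cost-$2$ edges at all (thereby wasting the discount) or force a surplus of cost-$1$ edges exceeding $\beta$. I expect this structural dichotomy, together with the associated case analysis over possible cut shapes, to be the technical heart of the proof.
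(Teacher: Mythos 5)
Your high-level strategy---reduce from a clique-type problem and calibrate a budget so that the top-$k$ discount is forced onto designated heavy edges---is the same spirit as the paper's proof, which reduces from \textsc{Clique} on $d$-regular graphs using parallel-edge bundles: $q=2p+2$ cost-two edges from $s$ to each vertex, $p+q+1$ cost-one edges from each vertex to $t$, with $p=(d-k+1)k$, discount parameter $k'=kq$, and $\beta=p+(n-k)(p+q+1)$. However, your proposal stops exactly where the theorem's difficulty begins. The budget, the bundle multiplicities, and the argument that no ``unforeseen cut shape'' can beat $\beta$ are all deferred; in the paper this is the entire content of the backward direction (the two counting cases showing that the $t$-side must contain exactly $k$ original vertices, and then the regularity argument that $|E(X,Y)|\le (d-k+1)k$ forces those $k$ vertices to induce a clique). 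You yourself flag this case analysis as ``the technical heart of the proof,'' so what you have is a plan, not a proof. Moreover, the concrete incidence gadget you sketch does not deliver what you claim: if the $\binom{\ell}{2}$ edge-nodes of the clique are placed on the $s$-side together with their chosen endpoints, then no cost-two coupling edge associated with a clique edge crosses the cut at all; the crossing cost-two edges are the couplings from chosen vertex-nodes to \emph{non-clique} edge-nodes, so the assertion that ``the $k$ discounted cost-two edges encode an $\ell$-clique'' is not true of the intended solution, and it is unclear how discounting $\binom{\ell}{2}$ of those other edges certifies adjacency.

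For the \textsf{W[1]} part there is an additional concrete obstruction: you propose weight strata given by ``distinct powers of a small polynomial in $n$,'' one per colour class (or per pair). With $\Theta(\ell)$ geometrically separated levels the costs reach $n^{\Theta(\ell)}$, which violates the hypothesis of the statement that costs are bounded by a polynomial in the size of the input graph---and that boundedness is precisely what makes the claim nontrivial. The paper avoids this by using a single heavy level: it reuses the \textsf{NP}-hardness gadget but replaces each bundle of $q$ parallel cost-two edges $sv$ by one edge of cost $q$, so the parameter stays $k$ and all costs are $\Oh(n^2)$. With only one weight level available, weight gaps alone cannot ``steer the discount unambiguously'' among the heavy edges (contrary to your remark that the polynomial-weight case is routine), so you are thrown back on exactly the structural counting argument that is missing from the $\{1,2\}$ case as well. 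To complete your route you would need to supply explicit multiplicities, an explicit $\beta$, and the exchange/counting analysis ruling out deviating cuts, in the style of the paper's proof.
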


While \minstexpshort{} is \textsf{NP}-complete on general graphs, our next
theorem establishes that on graphs of bounded genus, \emph{all} eight
variants of the discounted cut problem are solvable in polynomial time
when the maximum edge cost is polynomial in $n$.  The proof relies on a
generic approach that reduces discounted cuts with polynomial weights to
their classic counterparts on bounded-genus graphs.  This reduction
builds on an algebraic technique
\cite{galluccio2001optimization} for handling generating functions of
cuts in bounded-genus graphs.

\begin{restatable}{theorem}{thmglobalgenus}\label{thm:global_genus}
  Each of the eight problems
  \begin{itemize}
  \item \textsc{Min/MaxCut-$k$ cheap/exp},
  \item \textsc{Min/Max $s$-$t$-Cut-$k$ cheap/exp}
  \end{itemize}
  restricted to graphs of genus $g$, given with the corresponding
  embedding, admits a randomized algorithm of running time
  $$ \left(4^{g}\cdot n^{1.5} +m\cdot M\right)\cdot m^2\cdot
  M\cdot\log^{\Oh(1)}(n+M),$$ where $M$ is the total edge cost.  The
  algorithm can be derandomized for the cost of additional multiplier of
  order $n$ in the running time.
\end{restatable}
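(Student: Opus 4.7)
The plan is to reduce each of the eight discounted cut problems on a bounded-genus graph $G$ to computing a small number of trivariate generating functions of cuts of $G$ with formal edge weights, and then invoke the Pfaffian-based algorithm of Galluccio and Loebl to evaluate these generating functions in polynomial time. The key enabling observation is an identity that decouples the non-additive ``top-$k$'' operation. For any cut $(A,B)$ with $|E(A,B)| \geq k$, let $\tau^*$ denote the cost of the $k$-th most expensive edge of $E(A,B)$, and write $a_\tau(A,B) = |\{e \in E(A,B) : c(e) \geq \tau\}|$ and $b_\tau(A,B) = \sum_{e \in E(A,B),\, c(e) < \tau} c(e)$. Splitting $E(A,B)$ by whether $c(e)$ is $<$, $=$, or $>\tau^*$ yields
\[
  \coste_k(A,B) \;=\; b_{\tau^*}(A,B) + \bigl(a_{\tau^*}(A,B) - k\bigr)\,\tau^*,
\]
with an analogous identity for $\costc_k$ (replacing $\tau^*$ by the $k$-th cheapest edge cost and swapping the roles of ``$\geq$'' and ``$\leq$''). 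Since edge costs are integers, the condition ``$\tau = \tau^*(A,B)$'' is captured by $a_{\tau+1}(A,B) \leq k-1$ and $a_\tau(A,B) \geq k$; the boundary case $|E(A,B)| \leq k$, in which the discounted cost is $0$, is detected separately via a standard unweighted min-cut check.

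The second step reduces the optimization to one cut partition function per threshold. For each $\tau \in \{0, 1, \ldots, M\}$, assign each edge $e$ the formal monomial weight $w_\tau(e) = x^{[c(e) > \tau]}\, y^{[c(e) = \tau]}\, z^{c(e)\cdot [c(e) < \tau]}$ and form
\[
  Z_\tau(x,y,z) \;=\; \sum_{(A,B)} \prod_{e \in E(A,B)} w_\tau(e).
\]
The coefficient $[x^p y^q z^r]\,Z_\tau$ counts cuts with $p$ edges of cost $>\tau$, $q$ edges of cost $=\tau$, and $b_\tau(A,B) = r$; any nonzero coefficient with $p \leq k-1$ and $p+q \geq k$ witnesses a cut whose discounted cost is exactly $r + (p+q-k)\tau$, so minimizing (respectively maximizing) over all such valid coefficients and over all $\tau$ recovers the optimum. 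The $s$-$t$ variants are handled by computing the generating function of $G$ minus that of the graph obtained by identifying $s$ and $t$, so that only cuts separating $s$ from $t$ survive; the max-cut and cheap-discount variants are obtained by symmetric reductions.

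The algorithmically heaviest step is the evaluation of $Z_\tau$. Galluccio and Loebl show that for a genus-$g$ embedding of $G$, the cut partition function with arbitrary edge weights is an alternating sum of $4^g$ Pfaffians of $\Oh(n)\times\Oh(n)$ skew-symmetric matrices derived from the embedding. Applied over the polynomial ring $\mathbb{Z}[x,y,z]$ with operands of size $\Oh(m^2 M)$, the method runs in $\Oh(4^g n^{1.5})$ ring operations, each costing $\Oh(m^2 M)\cdot\log^{\Oh(1)}(n+M)$ via FFT-based polynomial arithmetic, giving the claimed running time; randomization enters through standard evaluate-and-interpolate techniques for the symbolic Pfaffian computation, and derandomization by deterministic multi-point evaluation with Lagrange interpolation accounts for the extra factor of $n$. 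The principal technical obstacle will be verifying that the Galluccio--Loebl Pfaffian decomposition lifts cleanly to polynomial-valued edge weights without inflating the $4^g$ or $n^{1.5}$ factors---this is expected to hold because their decomposition is a ring identity over the underlying edge-weight ring---together with redoing the bit-complexity accounting and uniformly packaging the eight variants (min vs.\ max, global vs.\ $s$-$t$, cheap vs.\ expensive) inside a single framework.
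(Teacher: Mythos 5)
Your reduction (the identity $\coste_k(A,B)=b_{\tau^*}+(a_{\tau^*}-k)\tau^*$ with a per-threshold trivariate cut generating function $Z_\tau(x,y,z)$, and $s$-$t$ variants via $Z_\tau(G)-Z_\tau(G/st)$) is a legitimate and genuinely different encoding from the paper's, which instead sorts the edges, iterates over $m+1$ positions $t$ in the sorted order, gives the reimbursed edges the modified cost $M>\sum_e c(e)$ (plus a heavy $s$-$t$ edge routed through a new handle for the $s$-$t$ variants), and reads off the pair (number of reimbursed edges, paid cost) from the quotient and remainder modulo $M$ of a single \emph{univariate} cut polynomial, evaluated at $M'+1$ points modulo a random prime and recovered by Lagrange interpolation. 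The problem with your version is the algorithmic core. The claimed bound rests on the assertion that the Galluccio--Loebl Pfaffian decomposition can be evaluated in $\Oh(4^g n^{1.5})$ \emph{ring} operations over $\mathbb{Z}[x,y,z]$; but the $n^{1.5}$ bound comes from nested-dissection elimination over a finite field (it needs divisions and evaluates the polynomial at integer points modulo a prime), and you yourself flag the lift to polynomial-valued weights as unverified. If you instead fall back on evaluate-and-interpolate, as you hint, $Z_\tau$ has $\Theta(m^2 M)$ unknown coefficients, so each threshold needs $\Theta(m^2M)$ Pfaffian evaluations at cost $4^g n^{1.5}\log^{\Oh(1)}(\cdot)$ each, i.e.\ $4^g n^{1.5} m^2 M$ per threshold; with your $M+1$ thresholds (and even after restricting $\tau$ to the at most $m$ distinct edge costs, which you should) the total exceeds the theorem's bound $\bigl(4^g n^{1.5}+mM\bigr)m^2M\log^{\Oh(1)}(n+M)$ by a factor of at least $m$. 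So as written the proposal does not establish the stated running time.

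Two further points need repair. First, the coefficients of $Z_\tau$ can be as large as $2^n$, so you cannot do the interpolation over the integers within the claimed bit complexity; the paper's randomization is precisely the cure (work modulo a prime chosen uniformly from $2n$ primes exceeding the relevant cost bound, so a fixed nonzero coefficient survives with probability at least $1/2$; derandomize by repeating for $n+1$ primes, which is the source of the extra factor $n$). Your explanation that ``randomization enters through evaluate-and-interpolate'' and that derandomization is ``deterministic multi-point evaluation with Lagrange interpolation'' does not account for this and does not yield either the success probability or the factor-$n$ overhead in the statement. Second, for the $s$-$t$ variants you apply the genus-$g$ machinery to $G/st$, but identifying $s$ with $t$ need not preserve the given embedding; you must first route an auxiliary $s$-$t$ edge through an added handle (as the paper does) and contract it, so the subroutine runs on a surface of genus $g+1$ with an explicit embedding. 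These are fixable, but the running-time gap in the symbolic Pfaffian/interpolation step is the substantive one.
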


\begin{restatable}{theorem}{thmglobalopt}
\label{thm:gen-alg}
If \textsc{Minimum $\Pi$} (\textsc{Maximum $\Pi$}, respectively) can be solved in time $T(|U|,\|\mathcal{F}\|,M)$ then \textsc{Minimum $\Pi$ with $k$ Free Cheap Elements} (\textsc{Maximum $\Pi$ with $k$ Free Expensive Elements}, respectively) can be solved in
time $\Oh(|U|\cdot T(|U|,\|\mathcal{F}\|,M))$ where $M$ is the maximum value of the cost function.
\end{restatable}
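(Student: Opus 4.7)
The plan is to reduce each of the two problems to $\Oh(|U|)$ calls of the underlying oracle for \textsc{Minimum $\Pi$} (respectively \textsc{Maximum $\Pi$}) via a threshold-guessing trick. I describe the \textsc{Minimum $\Pi$ with $k$ Free Cheap Elements} case in detail; the maximum/expensive case is fully symmetric.

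For a parameter $t\ge 0$, define the modified cost $\tilde{c}_t(e):=\max(c(e),t)$ and set
\[
g_t(F) \;:=\; \sum_{e\in F}\tilde{c}_t(e) \;-\; k\cdot t
\]
for $F\in\mathcal{F}$. The key pointwise claim I would establish is that $\costc_k(F)\le g_t(F)$ for every $t\ge 0$ and every $F\in\mathcal{F}$, with equality whenever $t^\star$ equals the $k$-th smallest element-cost in $F$. The proof is a direct case analysis that partitions $F$ according to whether $c(e)<t$, $c(e)=t$, or $c(e)>t$: letting $a$ and $b$ be the number of elements strictly below and strictly above $t$, one computes $\costc_k(F)$ and $g_t(F)$ explicitly and observes that their difference is a non-negative multiple of a cost gap that vanishes exactly when $a\le k\le a+|\{e\in F:c(e)=t\}|$. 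Handling ties at the threshold is the only subtle bookkeeping.

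Given the pointwise inequality, swapping the order of minimisation yields
\[
\min_{F\in\mathcal{F}}\costc_k(F) \;=\; \min_{t\ge 0}\min_{F\in\mathcal{F}} g_t(F),
\]
and since $t^\star$ is realised by an element cost, it suffices to enumerate $t\in\{c(e):e\in U\}\cup\{0\}$, i.e.\ $\Oh(|U|)$ candidates. For each such $t$, minimising $g_t$ over $\mathcal{F}$ is precisely \textsc{Minimum $\Pi$} with cost function $\tilde{c}_t$ (the additive constant $-kt$ does not change the minimiser), so the oracle returns a solution $F_t$ in time $T(|U|,\|\mathcal{F}\|,M)$. Taking the best $F_{t^\star}$ over all guesses gives the claimed runtime, and $\costc_k(F_{t^\star})\le g_{t^\star}(F_{t^\star})=\min_{F,t}g_t(F)=\opt$, so $F_{t^\star}$ is feasible and optimal under $\costc_k$.

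For \textsc{Maximum $\Pi$ with $k$ Free Expensive Elements}, the dual construction uses $c'_t(e):=\min(c(e),t)$ and $h_t(F):=\sum_{e\in F}c'_t(e)-k\cdot t$; an analogous case analysis shows $\coste_k(F)\ge h_t(F)$ with equality when $t^\star$ is the $k$-th largest cost in $F$, so $\Oh(|U|)$ calls to the \textsc{Maximum $\Pi$} oracle again suffice. I expect the main obstacle to be the careful discharge of tie-breaking cases in the pointwise inequality; once that is settled, the reduction is purely mechanical.
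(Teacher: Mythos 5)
Your reduction is essentially the paper's own proof: Lemma~\ref{lem:general} uses exactly the same clamped cost functions $c_w(x)=\max(c(x),w)$ (resp.\ $c^w(x)=\min(c(x),w)$), thresholds ranging over the element costs $C(U)$, and the identity $\opt_{\rm min}^{\rm cheap}(U,c,\mathcal{F},k)=\max\{0,\min_{w\in C(U)}(\opt_{\rm min}(U,c_w,\mathcal{F})-kw)\}$, giving the same $\Oh(|U|)$ oracle calls. The only detail you gloss over is the degenerate case of feasible sets of size at most $k$, where your pointwise bound $\costc_k(F)\le g_t(F)$ can fail because $g_t(F)$ may be negative; the paper handles this by clamping the optimum at $0$, and since your algorithm returns the set rather than the value $g_{t^\star}(F_{t^\star})$, the same cosmetic fix makes your argument fully correct.
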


Hence, if \textsc{Minimum $\Pi$} is solvable in polynomial time, then
\textsc{Minimum $\Pi$ with $k$ Free Cheap Elements} is as well, and if
\textsc{Maximum $\Pi$} is, then so is
\textsc{Maximum $\Pi$ with $k$ Free Expensive Elements}.
By pipelining
\Cref{thm:gen-alg} with known complexity results for classic variants of
cut problems, we obtain a number of corollaries.

Since \minstcut{} is polynomial-time solvable in almost linear
time~\cite{ChenKLPGS23}, by \Cref{thm:gen-alg}, \minstcheapshort{} can be
solved in $\Oh(m^{2+o(1)}\log M)$ time for instances with cost functions
of maximum value $M$.  Furthermore, this result holds for directed
graphs.

Similarly, \mincut{} can be solved in polynomial time by the randomized
Karger's algorithm~\cite{karger1996newapproach} or the deterministic
Stoer--Wagner algorithm~\cite{stoer1997simple}.  In particular, using
the improved version of Karger's algorithm given by
\citet{GawrychowskiMW24}, we have that \mincheapshort{} is
solvable in $\Oh(m^2\log^2n\log M)$ time by a randomized algorithm for
instances with cost functions of maximum value $M$.

The classic \maxcut{} is \textsf{NP}-complete.  However, \maxcut{} can be
solved in polynomial time on planar graphs~\cite{hadlock1975finding},
graphs excluding $K_5$ as a minor~\cite{BARAHONA1983107}, and graphs of
bounded genus~\cite{galluccio2001optimization}.  Combining
\Cref{thm:gen-alg} and the results from \cite{LiersP12}, we obtain that \maxexpshort{}
can be solved in
$\Oh(n^{5/2}\log n\log M)$ time on planar graphs for instances with cost
functions of maximum value $M$.

\bigskip

By \Cref{thm:global_genus}, \minstexpshort{} is solvable in polynomial
time on graphs of bounded genus with polynomial costs.  We show that on
planar graphs, the constraint on costs can be excluded.  Interestingly,
already the first papers~\cite{wollmer1963methodsdetermining,mcmasters_optimal_1970} on
\minstexpshort studied planar graphs.  However, polynomial time
algorithms were known only for the situation when both terminal vertices
$s$ and $t$ lie on outer face.

\begin{restatable}{theorem}{thmplanarcutp}
  \label{thm:planar-vital}
  On planar graphs, \minstexpshort can be solved in time
  $\Oh(kn^2\log n \log{M})$ where $M$ is the maximum value of the cost
  function.
\end{restatable}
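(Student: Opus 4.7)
The plan is to reduce \minstexpshort{} on planar graphs to a family of shortest-path computations in a layered planar dual, where the layering explicitly models the choice of up to $k$ discounted edges. Observe first that for any $s$-$t$-cut $C$,
\[
\coste_k(C) \;=\; \min_{F \subseteq C,\, |F| \le k}\; \sum_{e \in C \setminus F} c(e),
\]
because the minimizing $F$ is always the top-$k$ most expensive edges of $C$. Hence it suffices to find a pair $(C,F)$ with $F \subseteq C$ and $|F| \leq k$ minimizing $\sum_{e \in C \setminus F} c(e)$; equivalently, to solve an $s$-$t$-min-cut problem in which up to $k$ cut edges may be declared ``free.''

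Fix a plane embedding of $G$ and use BFS to produce a simple $s$-$t$-path $P$ in $G$. Every $s$-$t$-cut must contain at least one edge of $P$, so it suffices to compute, for each $e \in P$, the minimum discounted cost of a cut containing $e$ and return the smallest value. By Whitney-type planar duality, an $s$-$t$-cut containing a fixed $e \in P$ corresponds bijectively to a cycle in the dual $G^*$ through $e^*$; equivalently, after removing $e^*$ from $G^*$, to a path between the two endpoints $u_e^L, u_e^R$ of $e^*$. The hypothesis $e \in P$ ensures, via a Jordan-curve argument, that the corresponding cocycle in $G$ indeed separates $s$ from $t$.

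To handle up to $k$ free edges, for each $e \in P$ I would build a layered auxiliary graph $\hat G_e$ consisting of $k+1$ copies of $G^* \setminus \{e^*\}$, indexed $0,\ldots,k$. Within copy $i$, each dual edge $f^*$ appears with weight $c(f)$; between consecutive copies $i$ and $i+1$, each $f^*$ appears with weight $0$. Traversing an inter-layer copy of $f^*$ corresponds to declaring the primal edge $f$ as free; thus a $(u_e^L,0)$-to-$(u_e^R,j)$ path in $\hat G_e$ using $j \le k$ inter-layer transitions corresponds to an $s$-$t$-cut containing $e$ with $j$ designated free edges, and its length equals the discounted cost of that cut modulo the contribution of $e$ itself (handled by running Dijkstra twice per edge $e$, once treating $e$ as paid and once as free). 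Returning the minimum over $e \in P$ and both cases yields the optimum.

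The graph $\hat G_e$ has $\Oh(kn)$ vertices and edges, since $|E(G^*)| = \Oh(n)$ in the planar case; a single Dijkstra call thus runs in $\Oh(kn \log n \log M)$ time, the $\log M$ factor absorbing arithmetic on partial-sum weights bounded by $nM$. There are $\Oh(n)$ edges in $P$, yielding the claimed overall running time of $\Oh(kn^2 \log n \log M)$. The principal obstacle is the case where $s$ and $t$ do not share a face of the embedding; the per-edge iteration over $P$ sidesteps this, since each $e \in P$ locally ``witnesses'' a candidate cut without any need to add an artificial $s$-$t$-arc to the embedding. A secondary subtlety will be checking that the inter-layer transitions of $\hat G_e$ couple correctly with the choice of the free-edge set $F$; this follows from the cycle-cut bijection together with the observation that, for any fixed cut, the optimal $F$ is its top-$k$ edges, so Dijkstra automatically identifies the right set without explicit enforcement.
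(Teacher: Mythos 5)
Your reduction breaks at the duality step. For an edge $e=uv\in P$, a cycle of $G^*$ through $e^*$ corresponds to a minimal cut that separates the two \emph{endpoints of $e$}, but not necessarily $s$ from $t$: the Jordan-curve argument you invoke only gives that the closed curve crosses $e$ once, so $u$ and $v$ lie on opposite sides. If the cycle crosses $P$ a second time (more generally, an even number of further times), then $s$ and $t$ end up on the same side and the corresponding cut is not an $s$-$t$-cut at all. Concretely, with $P=s\,a\,t$, the dual cycle consisting of the duals of the edges incident to $a$ passes through $(sa)^*$, crosses $P$ twice, and merely isolates $a$; its (discounted) cost can be far smaller than that of any $s$-$t$-cut, so your algorithm would return a value below the true optimum and accept no-instances. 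The correct separation criterion is parity: a dual cycle yields an $s$-$t$-cut if and only if it crosses the fixed path $P$ an \emph{odd} number of times (the discrete Jordan curve theorem, Observation~\ref{obs:cycle-sep}), and nothing in your per-edge enumeration over $e\in P$ or in the layered graph $\hat G_e$ enforces this.

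The fix is essentially the paper's proof: instead of pinning a single crossing edge, one must carry the crossing parity with $P$ as a state. The paper subdivides every dual edge whose primal edge is not on $P$ (giving the new half-edges costs $c$ and $0$), so that odd crossing number with $P$ becomes odd cycle length in the modified dual $G^\star$, and then runs a Dijkstra-style dynamic program over states (vertex, number of free edges used $\le k$, parity) to find a minimum-discounted-cost odd closed walk through each start vertex, extracting an odd cycle from it; this gives the $\Oh(kn^2\log n\log M)$ bound. Your layered ``$k+1$ copies'' construction is a fine way to encode the free-edge budget (and your observation that the minimum over all choices of $\le k$ free edges equals $\coste_k$ is correct and is also used implicitly in the paper), but without the additional parity coordinate relative to $P$ the search ranges over cuts that need not separate $s$ and $t$, so the argument as written is not correct.
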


The core idea in the proof of \Cref{thm:planar-vital} is that computing
a discounted cut in a planar graph~$G$ can elegantly be reduced to finding
specific walks in its dual graph~$G^*$.  More specifically, every
minimal cut in~$G$ corresponds to a cycle in~$G^*$.  Let us fix an
$s$-$t$-path $P$ in~$G$.
The key observation (sometimes known as the \emph{discrete Jordan curve theorem})
is that a cycle in~$G^*$
corresponds to an $s$-$t$-cut in~$G$ if and only if it crosses~$P$ an
odd number of times.  Consequently, the problem reduces to finding a
minimum-cost closed walk in~$G^*$ that intersects~$P$ an odd number of
times, which can be solved using dynamic programming.

Finally, for \minexpshort, we give a polynomial-time algorithm that
works with arbitrary edge costs on general graphs.  In light of the
\textsf{NP}-hardness of \minstexp\ (see \Cref{thm:w-one-hard}), this
result highlights a surprising difference in the complexity of these two
problems.

\begin{restatable}{theorem}{thmglobalcut}
  \label{thm:global_mincut_p}
  \minexpshort admits a randomized algorithm that runs in time
  $\mathcal{O}(n^3\cdot m\cdot \log^4 n\cdot\log\log n\cdot \log M)$
  where $M$ is the maximum value of the cost function.
\end{restatable}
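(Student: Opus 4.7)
The plan is to reduce \minexpshort on a general graph to polynomially many calls to the classical global \mincut problem on modified weighted graphs, and then apply a fast randomized \mincut routine (such as the Gawrychowski--Mozes--Weimann variant of Karger--Stein, which runs in $\tilde O(m)$ time) to obtain the stated complexity. Derandomization is then a standard step absorbed into the polynomial overhead of the bound.

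The structural starting point is the following. Let $(A^*, B^*)$ be an optimal cut with $|E(A^*, B^*)| > k$ (the case $|E(A^*, B^*)| \le k$ gives $\opt = 0$ and is trivial). Let $e^*$ be the $(k{+}1)$-th heaviest edge of $E(A^*, B^*)$, and set $\tau^* := c(e^*)$; this is the heaviest edge in the optimal cut that is \emph{not} discounted. The top-$k$ discounted edges are then exactly the edges of $E(A^*, B^*)$ of weight strictly greater than $\tau^*$, together with (possibly) some tied edges of weight $\tau^*$. Consequently, if we define the auxiliary graph $G_{e^*}$ obtained from $G$ by setting the weight of every edge of weight greater than $c(e^*)$ to zero, then $(A^*, B^*)$ is a cut of $G_{e^*}$ whose weight equals exactly $\opt$; writing $e^* = (u,v)$, it is in particular a $(u,v)$-cut.

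This suggests the following algorithm: for each edge $e^* = (u,v) \in E(G)$, compute a minimum $(u,v)$-cut in $G_{e^*}$, evaluate its discounted cost under the original weights $c$, and return the minimum over all choices of $e^*$. For the ``correct'' pivot $e^*$ coming from $(A^*, B^*)$, the minimum $(u,v)$-cut in $G_{e^*}$ has weight at most $\opt$; if that minimum cut happens to use at most $k$ edges of weight strictly greater than $c(e^*)$, then its true discounted cost equals its $G_{e^*}$-weight, which is $\opt$. Since $\coste_k$ of any cut is at least $\opt$ by optimality, the algorithm would then output exactly $\opt$. Each iteration invokes one randomized \mincut call in time $O(m \log^4 n \log\log n)$, and a binary search over a target value of $\opt$ contributes the $\log M$ factor.

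The main obstacle is the subtle case in which the minimum $(u,v)$-cut in $G_{e^*}$ uses \emph{more} than $k$ edges of weight greater than $c(e^*)$: there the $G_{e^*}$-weight may drop strictly below $\opt$ while the true discounted cost overshoots, and $(A^*, B^*)$ need not appear among the minimum cuts of $G_{e^*}$. To circumvent this, I would enumerate additional structural witnesses---for example, near-minimum cuts of $G_{e^*}$ using Karger's bound of $O(n^{2\alpha})$ cuts of weight at most $\alpha$ times the minimum, or an auxiliary vertex triple that constrains the distribution of heavy edges across the cut. This secondary enumeration is the source of the $n^3$ factor in the overall running time, and a careful book-keeping argument would be needed to guarantee that $(A^*, B^*)$ itself (rather than merely an upper bound surrogate) is encountered by the sweep.
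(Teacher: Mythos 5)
Your reduction skeleton (guess a pivot/threshold separating the discounted ``heavy'' edges from the paid ``cheap'' edges, and try all $m$ or so choices) is essentially the same first step as the paper's, but the core of the argument is missing and the proposed patch does not work. After zeroing the heavy edges to form $G_{e^*}$, the subproblem you actually need to solve is \emph{not} a plain minimum $(u,v)$-cut: you need a cut minimizing the total cost of the cheap edges \emph{subject to the constraint that it contains at most $k$ heavy edges}. You correctly identify that a minimum cut of $G_{e^*}$ may contain many more than $k$ zeroed edges, but your fix via Karger's near-minimum-cut enumeration fails: the $\Oh(n^{2\alpha})$ bound counts cuts within a factor $\alpha$ of the minimum cut of $G_{e^*}$, and that minimum can be far below $\opt$ (indeed it can be $0$ if some cut consists entirely of heavy edges), so $\alpha$ is unbounded and the enumeration is not polynomial. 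The ``auxiliary vertex triple'' idea is not developed enough to assess. There is also a smaller slip: with ties at the pivot weight $\tau^*$, the $G_{e^*}$-weight of $(A^*,B^*)$ can strictly exceed $\opt$, so one must split edges by rank in a fixed sorted order rather than by weight value.

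The paper closes exactly this gap by invoking a different black box: for each threshold $t$ over the sorted edge order it builds an instance of \textsc{Bicriteria Global Minimum Cut} with $w_1$ equal to the original cost on cheap edges (and $0$ on expensive ones) and $w_2$ equal to the indicator of expensive edges, asking for a cut with $w_1 \le \beta$ and $w_2 \le k$; this constrained problem is solvable by the randomized algorithm of Aissi, Mahjoub and Ravi in $\Oh(n^3 \log^4 n \log\log n \log M)$ time, and the $n^3$ in the theorem's bound comes from that subroutine, not from cut enumeration or binary search. Without the bicriteria (or some equivalent budget-constrained) minimum-cut tool, the ``more than $k$ heavy edges in the cheap-minimum cut'' case remains unresolved, so your proposal as stated does not constitute a proof.
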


The proof of \Cref{thm:global_mincut_p} relies on an algorithm for the
\textsc{Bicriteria Global Minimum Cut}.  In this problem, we are given a
graph $G$ with two edge weight functions
$w_1,w_2\colon E(G)\rightarrow \mathbb{R}_{\ge 0}$ and real numbers
$b_1,b_2\geq 0$.  The task is to decide whether there is a cut $(A,B)$
of $G$ such that $w_i(E(A,B))\le b_i$ for every $i\in \{1,2\}$.  By 
 \cite{armon2006multicriteriaglobal},  this problem is
solvable in polynomial time.  (The algorithm of Armon and Zwick runs in
polynomial time even in a more general setting for an arbitrary fixed
number $k$ of criteria.  But for the proof of
\Cref{thm:global_mincut_p}, we need only $k=2$.)  The fastest known
algorithm for \textsc{Bicriteria Global Minimum Cut} is due to
\citet{AissiMR17}, and provides a solution in time
$ \Oh(n^3 \log^4{n}\log{\log {n}})$ with probability $1-1/\Omega(n)$.

\subsection{Related Work}

The problem of determining the \emph{most vital link}
(singular)~\cite{wollmer1963methodsdetermining} in a graph is the
following.  Given a flow network $(G, s, t, c)$, that is, a graph $G$
with source and target vertices, and a capacity function, determine
which edge in $G$ should be removed to obtain a graph with the smallest
possible maximum $s$-$t$-flow.  Clearly, this can be done in polynomial
time, since we can try every edge and run a max-flow algorithm.  By the
Max Flow--Min Cut theorem, we want to find an edge whose removal reduces
the Min $s$-$t$-cut as much as possible.  Wollmer studied the problem on
$s$-$t$-planar graphs, i.e., where the flow network is planar and $s$
and $t$ are both on the outer face.  Using a correspondence between
minimal cuts and paths in planar graphs, Wollmer provided a linear-time
algorithm for the problem on this class of graphs.

Wollmer extended the concept of most vital link to the \textsc{Most
  Vital Links} problem~\cite{wollmer1964removingarcs}, where the
objective is to remove $k$ edges to minimize the $s$-$t$-flow.  Again,
by the Max Flow--Min Cut theorem, this is equivalent to deleting $k$
edges to obtain a graph with as small a minimum $s$-$t$-cut as possible.
Therefore, we want to find a minimal cut and delete edges only from that
cut---$k$ edges with highest capacities.  The remaining edges will be
the resulting cut.  This implies that \textsc{Most Vital Links} is
equivalent to \minstexpshort.

\citet{phillips1993networkinhibition} studied this variant and
provided a Polynomial-Time Approximation Scheme (PTAS) for planar
graphs.  Later, 
 \citet{wood1993deterministicnetwork} claimed that
the \textsc{Most Vital Links} on planar graphs had been solved.
However, Wood's assertion merely referenced Wollmer's result, which
applies exclusively to $s$-$t$-planar graphs.  In fact, Wollmer's
algorithm assumes $s$ and $t$ share a face in the planar embedding, as
it starts by creating a direct edge between $s$ and $t$, which is not
feasible otherwise.

The \textsc{Most Vital Links} has been studied under various names
across different fields.  One such area is \emph{graph vulnerability},
which examines how much a graph ``breaks apart'' when a certain number
of edges or vertices are removed.  Another is \emph{network
  inhibition}~\cite{phillips1993networkinhibition} or \emph{network
  interdiction}~\cite{wood1993deterministicnetwork}, where the goal is
to ``damage'' a graph as effectively as possible while using minimal
resources.  
\citet{mcmasters_optimal_1970} study the
problem of allocating a limited number of aircraft to interdict an
enemy’s supply lines on a particular day using and give a linear
programming algorithm for the case where the cost of destroying an edge
is the same as its capacity.  Like Wollmer, they consider only
$s$-$t$-planar graphs.

Later, several variations of the \textsc{Most Vital Links} problem were
studied, mostly under the names \textsc{Network Interdiction} and
\textsc{Network Inhibition}.  Many of these problems were modeled as
Stackelberg games, in which an interdictor attempts to damage the
network while a defender tries to repair or reinforce
it~\cite{steinrauf1991networkinterdiction}.  \citet{zenklusen_network_2010} studied several
interdiction problems, including flow interdiction on planar
graphs and later, matching
interdiction~\cite{Zenklusen14}, i.e., removing edges to make the
maximal matching as small as possible.

Interestingly, despite the  \textsc{Most Vital Links} problem's long history and extensive
bibliography, several open algorithmic questions remain about the minimum
and maximum cuts whose cost is measured up to $k$ edges.  In this paper,
we address these gaps and introduce additional algorithmic questions,
hoping to stimulate further progress in this area.

\section{Preliminaries}
\label{sec:preliminaries}
For a positive integer $k$, we write $[k]$ to denote the set $\{1,
\ldots,k\}$.

\paragraph{Graphs.}
In this paper, we consider finite undirected
graphs and refer to the textbook by Diestel~\cite{Diestel12} for
undefined notions.  We use $V(G)$ and $E(G)$ to denote the set of
vertices and the set of edges of $G$, respectively.  We use~$n$ and~$m$
to denote the number of vertices and edges in~$G$, respectively, unless
this creates confusion, in which case we clarify the notation
explicitly.

For a vertex subset~$U \subseteq V$, we use~$G[U]$ to denote the
subgraph of~$G$ induced by the vertices in~$U$ and~$G - U$ to
denote~$G[V \setminus U]$.  For a vertex $v$, we use $\deg_G(v)$ to
denote its \emph{degree}, i.e, the number of its neighbors; we may omit
the index if $G$ is clear from the context.

A \emph{walk} $W=v_0\dots v_\ell$ of in a graph $G$ is a sequence of
(not necessarily distinct) vertices such that $v_{i-1}$ and $v_{i}$ are
adjacent for all $i\in[\ell]$.  We say that $W$ is \emph{odd} if $\ell$
is an odd integer.  We also say that $\ell$ is the \emph{length} of $W$
for unweighted graphs.  Given a cost/weight function
$c\colon E(G)\rightarrow\mathbb{Z}_{\geq 0}$, the length of
$W=v_0\dots v_\ell$ is defined as $\sum_{i=1}^\ell c(v_{i-1}v_i)$.  We
say that $W$ is \emph{closed} if $v_0=v_\ell$.  The vertices $v_0$
and~$v_\ell$ are the \emph{endpoints} of~$W$ and the other vertices are
\emph{internal}.  A \emph{path} is a walk with distinct vertices, and a
\emph{cycle} is a closed walk where all vertices except endpoint are
distinct.  Notice that a path $P=v_0\dots v_\ell$ is uniquely defined by
its set of edges $E(P)=\{v_{i-1}v_i\mid i\in[\ell]\}$, and the same
holds for a cycle.  Thus, we may say that a set of edges is a path
(cycle) if it is the set of edges of a path (a cycle, respectively).  We
also can consider a path or cycle to be a graph.

For two disjoint sets of vertices $A,B\subseteq V(G)$, we write
$E(A,B)=\{uv\in E(G)\mid u\in A,~v\in B\}$, An \emph{(edge) cut} of $G$
is a partition $(A,B)$ of $V(G)$ and the set of edges $E(A,B)$ is the
\emph{cut-set}.  We say that $(A,B)$ is an $s$-$t$-cut for two vertices
$s$ and $t$ if these vertices are in distinct sets of the partition.
Given a cost function $c\colon E(G)\rightarrow \mathbb{Z}_{\ge 0}$, the
\emph{cost} of a cut $(A,B)$ is defined as the sum of the cost of the
edges of $E(A,B)$.  A cut $(A,B)$ is an \emph{(inclusion) minimal} if
there is no other cut $(A',B')$ such that $E(A',B')\subset E(A,B)$.  It
is well-known that $(A,B)$ is minimal if and only if $G[A]$ and $G[B]$
are connected.

\paragraph{Planar graphs and graphs on surfaces.}
A graph is
\emph{planar} if it can be drawn on the plane such that its edges do not
cross each other.  Such a drawing is called a \emph{planar embedding} of
the graph and a planar graph with a planar embedding is called a
\emph{plane} graph.  The \emph{faces} of a plane graph are the open
regions of the plane bounded by a set of edges and that do not contain
any other vertices or edges.  We remind that a cycle of a plane graph
has exactly two faces.  For a plane graph~$G$, its \emph{dual}
graph~$G^*=(F(G),E^*)$ has the set of faces of $G$ as the vertex set,
and for each $e\in E(G)$, $G^*$ has the \emph{dual} edge~$e^*$ whose
endpoints are either two faces having~$e$ on their frontiers or~$e^*$ is
a self-loop at~$f$ if~$e$ is in the frontier of exactly one face~$f$
(i.e., $e$ is a bridge of $G$).  Observe that~$G^*$ is not necessarily
simple even if $G$ is a simple graph.  Throughout the paper, we will use
$\circ^*$ to denote the \emph{dual} of either a plane graph, a set of
vertices/faces, or a set of edges, or a single element.  Given a cost
function $c\colon E(G)\rightarrow \mathbb{Z}_{\ge 0}$, we assume that
$c(e^*)=c(e)$ for every $e\in E(G)$.

It is well known (see, e.g.,~\cite{Diestel12}) that finding an inclusion
minimal cut for a plane graph is equivalent to computing a cycle in the
dual graph.  More precisely, $(A,B)$ is a minimal cut of a connected
plane graph $G$ if and only if $\{e^*\in E(G^*)\mid e\in E(A,B)\}$ is a
cycle of $G^*$.  Furthermore, $(A,B)$ is a minimal $s$-$t$-cut if and
only if $C=\{e^*\in E(G^*)\mid e\in E(A,B)\}$ is a cycle of $G^*$ such
that $s$ and $t$ are in distinct faces of $C$.

The genus of a graph is the minimal integer $g$ such that the graph can
be drawn without crossing itself on a sphere with $g$ handles (i.e.\ an
oriented surface of the genus $g$).

\section{General Technique for Subset Problems with Free Elements}
\label{sec:general-techniques}

In this section, we present general results for minimization
(maximization, respectively) problems with free cheap (expensive,
respectively) elements.  We show that, given an algorithm $\mathcal{A}$
for \textsc{Minimum $\Pi$} (\textsc{Maximum $\Pi$}, respectively), we
can solve \textsc{Minimum $\Pi$ with $k$ Free Cheap Elements}
(\textsc{Maximum $\Pi$ with $k$ Free Expensive Elements}, respectively)
using at most $|U|$ calls of $\mathcal{A}$.

For an instance $(U,c,\mathcal{F})$ of \textsc{Minimum $\Pi$}
(\textsc{Maximum $\Pi$}, respectively), let
$\opt_{{\rm min}}(U,c,\mathcal{F})$
($\opt_{{\rm max}}(U,c,\mathcal{F})$, respectively) be the optimum cost
of a feasible solution, and we use
$\opt_{{\rm min}}^{{\rm cheap}}(U,c,\mathcal{F},k)$ and
$\opt_{{\rm max}}^{{\rm exp}}(U,c,\mathcal{F},k)$ for the optimum cost
for \textsc{Minimum $\Pi$ with $k$ Free Cheap Elements} and
\textsc{Maximum $\Pi$ with $k$ Free Expensive Elements}, respectively.
We assume that these values are zeros if optimum subsets are of
cardinality at most $k$.  Let $c\colon U\rightarrow\mathbb{Z}_{\geq 0}$
be a function and let $w\geq 0$ be an integer.  We use $c_w$ and $c^w$
for functions

\begin{minipage}{0.45\linewidth}\small
\[c_w(x)=
\begin{cases}
c(x)&\mbox{if }x> w\\
w&\mbox{if }x\leq w
\end{cases}
\]
\end{minipage}\hfill
\begin{minipage}{0.45\linewidth}\small
\[
c^w(x)=
\begin{cases}
c(x)&\mbox{if }x< w\\
w&\mbox{if }x\geq w
\end{cases}
\]
\end{minipage}

\smallskip\noindent
and write  $C(X)=\{c(x)\mid x\in X\}$ for $X\subseteq U$.

\newcommand{\Opt}{\ensuremath{{\mathrm{O}}}}

Now, let $$\Opt_{\min}{} = \min_{w\in C(U)}(\opt_{{\rm min}}(U,c_w,\mathcal{F})-kw)$$ and
$$\Opt_{\max}{} = \max_{w\in C(U)}(\opt_{{\rm max}}(U,c^w,\mathcal{F})-kw).$$

\begin{lemma}\label{lem:general}
For an instance $(U,c,\mathcal{F})$ of \textsc{Minimum $\Pi$} (\textsc{Maximum $\Pi$}, respectively) and $k\geq 0$,
\begin{itemize}
\item $\opt_{{\rm min}}^{{\rm cheap}}(U,c,\mathcal{F},k)=\max\{0, \Opt_{\min}{} \}$
\item $\opt_{{\rm max}}^{{\rm exp  }}(U,c,\mathcal{F},k)=\max\{0, \Opt_{\max}{} \}$.
\end{itemize}
\end{lemma}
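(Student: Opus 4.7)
The plan is to prove both identities by establishing two matching inequalities. I focus on the minimum/cheap case; the maximum/expensive case runs by a fully symmetric argument with $c_w$ replaced by $c^w$, ``cheapest'' replaced by ``most expensive'', and inequalities reversed. The central step is the following pointwise claim: for every $T \in \mathcal{F}$ with $|T| \geq k$ and every $w \geq 0$,
\[
c_w(T) - kw \;\geq\; c(T) - c(R_T),
\]
where $R_T$ denotes the set of the $k$ cheapest elements of $T$, so the right-hand side is the discounted cost of $T$. Writing $A = \{x \in T : c(x) \leq w\}$, a direct computation gives $c_w(T) - c(T) = |A|w - c(A)$, and the claim reduces to $c(R_T) \geq c(A) + (k - |A|)w$. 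I verify this by a case split on $|A|$: if $|A| \geq k$, then sorting $T$ by cost shows $R_T \subseteq A$ and each element of $A \setminus R_T$ has cost at most $w$, giving $c(A) - c(R_T) \leq (|A| - k)w$; if $|A| < k$, then $A \subseteq R_T$ and the $k - |A|$ elements of $R_T \setminus A$ all have cost at least $w$, giving $c(R_T) - c(A) \geq (k - |A|)w$.

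With this claim in hand, one direction $\opt_{\rm min}^{\rm cheap}(U,c,\mathcal{F},k) \leq \max\{0, \Opt_{\min}\}$ is immediate: for any $w \in C(U)$, let $T_w$ be an optimum for \textsc{Minimum $\Pi$} under cost $c_w$. Either $|T_w| < k$, in which case $T_w$ is a feasible set certifying $\opt_{\rm min}^{\rm cheap} = 0$, or $|T_w| \geq k$, and the claim yields $\opt_{\rm min}(U, c_w, \mathcal{F}) - kw = c_w(T_w) - kw \geq c(T_w) - c(R_{T_w}) \geq \opt_{\rm min}^{\rm cheap}$, so taking the minimum over $w$ gives $\Opt_{\min} \geq \opt_{\rm min}^{\rm cheap}$. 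For the matching lower bound I produce a tight witness $w^\star$. If $\opt_{\rm min}^{\rm cheap} > 0$, every feasible set has size greater than $k$; fix an optimum $S^\star$ and set $w^\star$ equal to the cost of the $k$-th cheapest element of $S^\star$, so that by construction $c_{w^\star}(S^\star) = kw^\star + (c(S^\star) - c(R_{S^\star}))$. Then $\opt_{\rm min}(U, c_{w^\star}, \mathcal{F}) - kw^\star \leq c_{w^\star}(S^\star) - kw^\star = \opt_{\rm min}^{\rm cheap}$, proving $\Opt_{\min} \leq \opt_{\rm min}^{\rm cheap}$. If instead $\opt_{\rm min}^{\rm cheap} = 0$ is witnessed by some feasible $S$ with $|S| \leq k$, I take $w^\star = \max C(U)$ so that $c_{w^\star} \equiv w^\star$; then $\opt_{\rm min}(U, c_{w^\star}, \mathcal{F}) - kw^\star \leq |S| w^\star - kw^\star \leq 0$, and the outer $\max\{0,\cdot\}$ delivers $\max\{0,\Opt_{\min}\} = 0 = \opt_{\rm min}^{\rm cheap}$.

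The maximum/expensive statement is handled by the dual pointwise inequality $c^w(T) - kw \leq c(T) - c(P_T)$ for $|T| \geq k$, where $P_T$ is the set of the $k$ most expensive elements of $T$; it is proved using $B = \{x \in T : c(x) \geq w\}$ and the same two-case decomposition, with the tight witness $w^\star$ taken to be the cost of the $k$-th most expensive element of an optimum $S^\star$. The main obstacle is the bookkeeping around the degenerate regime in which $\mathcal{F}$ contains feasible sets of size at most $k$: the key pointwise inequality genuinely fails for $|T| < k$ (at $T = \emptyset$, for instance, the left-hand side is $-kw$ while the right-hand side is $0$), so the outer $\max\{0,\cdot\}$ is not a cosmetic clamp; it must be justified by the dedicated choice $w^\star = \max C(U)$ that collapses $c_{w^\star}$ to a constant and handles the zero-cost regime separately from the main witness construction.
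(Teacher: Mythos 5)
Your proposal is correct and, at its core, follows the same route as the paper: both directions use the substitution $c_w$, the tight witness $w^\star$ equal to the cost of the $k$-th cheapest element of a discounted optimum $S^\star$ (so that $c_{w^\star}(S^\star)=kw^\star+c(S^\star)-c(R_{S^\star})$), and the fallback witness $w^\star=\max C(U)$ for the small-feasible-set regime. Your pointwise claim $c_w(T)-kw\geq c(T)-c(R_T)$ for $|T|\geq k$ is essentially a repackaging of the paper's computation $c_{w^\star}(S)-kw^\star\geq c_{w^\star}(S\setminus X)\geq c(S\setminus X)$, proved by the case analysis on $A=\{x\in T: c(x)\leq w\}$ instead of the pointwise bounds $c_w\geq w$ and $c_w\geq c$; if anything, comparing directly against the $k$ cheapest elements under $c$ (rather than under $c_w$) makes the final step $\geq\opt_{\rm min}^{\rm cheap}$ cleaner.

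One small slip in your lower-bound direction: your case split is on the value of $\opt_{\rm min}^{\rm cheap}$ and on whether a zero value is ``witnessed by'' a feasible set of size at most $k$, which leaves uncovered the non-vacuous case where $\opt_{\rm min}^{\rm cheap}=0$ yet every feasible set has more than $k$ elements (e.g., when the surviving elements all have cost zero). There you still must show $\Opt_{\min}\leq 0$, and your $w^\star=\max C(U)$ argument does not apply since no small feasible set exists. The fix is immediate and uses your own construction: the witness $w^\star$ taken as the cost of the $k$-th cheapest element of a discounted optimum $S^\star$ never uses positivity of $\opt_{\rm min}^{\rm cheap}$, only $|S^\star|\geq k$, so the correct dichotomy is ``some feasible set has size at most $k$'' versus ``all feasible sets have more than $k$ elements'' (which is also how the paper organizes this direction). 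With that adjustment, and the symmetric treatment of the maximization case you sketch, the argument is complete.
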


\begin{proof}[Proof of \Cref{lem:general}]
  We prove the lemma for \textsc{Minimum $\Pi$} and \textsc{Minimum
    $\Pi$ with $k$ Free Cheap Elements}; the proof for the maximization
  variant uses symmetric arguments.

  First, we prove that
\begin{equation*}
\opt_{{\rm min}}^{{\rm cheap}}(U,c,\mathcal{F},k)\geq\max\{0,\min_{w\in C(U)}(\opt_{{\rm min}}(U,c_w,\mathcal{F})-kw)
\}.
\end{equation*}
Note that $\opt_{{\rm min}}^{{\rm cheap}}(U,c,\mathcal{F},k)\geq 0$ by
definition.  Thus, we have to show that
$\opt_{{\rm min}}^{{\rm cheap}}(U,c,\mathcal{F},k)\geq \min_{w\in
  C(U)}(\opt_{{\rm min}}(U,c_w,\mathcal{F})-kw)$.  Assume that
$S\subseteq U$ is a feasible set of optimum cost.  We consider two
cases.

Suppose that $|S|\leq k$.  Then
$\opt_{{\rm min}}^{{\rm cheap}}(U,c,\mathcal{F},k)=0$.  Consider
$w^*=\max C(U)$.  Then
$\opt_{{\rm min}}(U,c_{w^*},\mathcal{F})\leq kw^*$ because $S$ is a
feasible solution of size at most $k$.  Therefore,
$\opt_{{\rm min}}(U,c_{w^*},\mathcal{F})- kw^*\leq 0$ and
$\opt_{{\rm min}}^{{\rm cheap}}(U,c,\mathcal{F},k)\geq \opt_{{\rm
    min}}(U,c_{w^*},\mathcal{F})- kw^* \geq \min_{w\in C(U)}(\opt_{{\rm
    min}}(U,c_w,\mathcal{F})-kw)$.

Assume that $|S|>k$ and let $X\subseteq S$ be the subset of $k$ cheapest
elements of $S$.  We set $w^*=\max C(X)$.  Then
\begin{align*}
\opt_{{\rm min}}^{{\rm cheap}}(U,c,\mathcal{F},k)=&c(S\setminus X)=c_{w^*}(S\setminus X)\\= &c_{w*}(S)-c_{w^*}(X)=c_{w*}(S)-kw^*\\ &
\geq  \opt_{{\rm min}}(U,c_{w^*},\mathcal{F})-kw^*\\ &\geq
\min_{w\in C(U)}(\opt_{{\rm min}}(U,c_w,\mathcal{F})-kw).
\end{align*}

For the opposite inequality
\begin{equation*}
\opt_{{\rm min}}^{{\rm cheap}}(U,c,\mathcal{F},k)\leq\max\{0,\min_{w\in C(U)}(\opt_{{\rm min}}(U,c_w,\mathcal{F})-kw)
\},
\end{equation*}
assume first that there is a feasible set of size at most $k$.  Then
$$\opt_{{\rm min}}^{{\rm cheap}}(U,c,\mathcal{F},k)=0$$ by definition and
the inequality holds.  Suppose that any feasible set is of size at most
$k+1$.  In particular, this means that
$\max\{0,\min_{w\in c(U)}(\opt_{{\rm
    min}}(U,c_w,\mathcal{F})-kw)=\min_{w\in C(U)}(\opt_{{\rm
    min}}(U,c_w,\mathcal{F})-kw).  $ Let $w^*\in C(U)$ and let $S$ be a
feasible set such that
$\min_{w\in C(U)}(\opt_{{\rm min}}(U,c_w,\mathcal{F})-kw)= \opt_{{\rm
    min}}(U,c_{w^*},\mathcal{F})-kw^*=c_{w^*}(S)-kw^* $.  Denote by
$X\subseteq S$ the set of $k$ cheapest elements of $S$ with respect to
$c_{w^*}$.  By the definition of $c_{w^*}$, $c_{w^*}(x)\geq w^*$ for any
$x\in U$.  Then
$c_{w^*}(S)-kw^*\geq c_{w^*}(S)-c_{w^*}(X)=c_{w^*}(S\setminus X)$.
Since $c_{w^*}(x)\geq c_w(x)$ for all $x\in U$,
$c_{w^*}(S\setminus X)\geq c(S\setminus X)\geq \opt_{{\rm min}}^{{\rm
    cheap}}(U,c,\mathcal{F},k)$.  This proves the inequality and
completes the proof of the lemma.
\end{proof}

\Cref{lem:general} proves that \textsc{Minimum $\Pi$ with $k$ Free Cheap
  Elements} and \textsc{Maximum $\Pi$ with $k$ Free Expensive Elements}
can be reduced to \textsc{Minimum $\Pi$} and \textsc{Maximum $\Pi$},
respectively.  This proves \Cref{thm:gen-alg}.

\bigskip

In particular, if \textsc{Minimum $\Pi$} or \textsc{Maximum $\Pi$} can
be solved in polynomial time then the corresponding problem with free
edges admits a polynomial-time algorithm as well.  It is well-known that
\minstcut is polynomial-time solvable and, by the recent results of \citet{ChenKLPGS23}, the minimum $s$-$t$-cut can be found in an
almost linear time.  This implies the following proposition.

\begin{proposition}\label{prop:minstcut}
  \minstcheapshort can be solved in $\Oh(m^{2+o(1)}\log M)$ time for
  instances with cost functions of maximum value $M$.  Furthermore, this
  result holds for directed graphs.
\end{proposition}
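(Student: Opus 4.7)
The plan is to derive Proposition~\ref{prop:minstcut} as a direct consequence of Theorem~\ref{thm:gen-alg} (equivalently, Lemma~\ref{lem:general}) combined with the almost-linear-time max-flow/min $s$-$t$-cut algorithm of \citet{ChenKLPGS23}. First, I would cast \minstcheapshort{} in the subset-problem framework of Section~\ref{sec:general-techniques}: take $U = E(G)$, let $\mathcal{F}$ be the family of edge sets of $s$-$t$-cuts of $G$, and let $c\colon E(G)\to\mathbb{Z}_{\geq 0}$ be the given cost function with maximum value $M$. With this encoding, \textsc{Minimum $\Pi$} is exactly \minstcut{} and \textsc{Minimum $\Pi$ with $k$ Free Cheap Elements} is exactly \minstcheapshort.

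Next, I would invoke Lemma~\ref{lem:general}, which states that the optimum discounted cost equals $\max\{0, \min_{w \in C(U)} (\opt_{\min}(U, c_w, \mathcal{F}) - kw)\}$. Hence it suffices to iterate over the at most $m$ distinct edge costs $w \in C(E(G))$, construct the modified weight function $c_w$ (which raises every edge of weight at most $w$ up to $w$), and solve \minstcut{} on $(G, c_w)$. Each such modification can be done in $\Oh(m)$ time, and after computing the $m$ resulting values $\opt_{\min}(U, c_w, \mathcal{F}) - kw$ we return the maximum of $0$ and their minimum.

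For each individual call, I would apply the almost-linear-time algorithm of \citet{ChenKLPGS23}, which computes a min $s$-$t$-cut in a graph with integer capacities bounded by $M$ in $m^{1+o(1)}\log M$ time. Multiplying by the at most $m$ calls yields the claimed $\Oh(m^{2+o(1)} \log M)$ total running time. Since the Chen et~al.\ algorithm is stated for directed graphs, and the proof of Lemma~\ref{lem:general} never uses undirectedness of the ground set (it manipulates only costs, sums of costs, and the feasibility predicate $\mathcal{F}$), the same argument yields the directed version.

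I do not anticipate any genuine obstacle: the statement is essentially a plug-in corollary. The only mild care needed is to verify that the transformed costs $c_w$ remain nonnegative integers bounded by $M$ (they clearly do, since $c_w(e) \in \{c(e), w\} \subseteq \{0, 1, \dots, M\}$), so that the Chen et~al.\ time bound applies uniformly across all $m$ calls with the same parameter $M$.
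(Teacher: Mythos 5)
Your proposal is correct and follows essentially the same route as the paper: the paper obtains Proposition~\ref{prop:minstcut} exactly by pipelining \Cref{thm:gen-alg} (i.e., \Cref{lem:general}, instantiated with $U=E(G)$ and $\mathcal{F}$ the $s$-$t$-cut-sets) with the almost-linear-time min $s$-$t$-cut algorithm of \citet{ChenKLPGS23}, which also yields the directed case. Your additional check that the modified costs $c_w$ stay in $\{0,\dots,M\}$ is a fine (if routine) verification that the paper leaves implicit.
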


Similarly, \mincut can be solved in polynomial time by the randomized
Karger's algorithm \cite{karger1996newapproach} or the deterministic
Stoer–Wagner algorithm~\cite{stoer1997simple}.  In particular, using the
improved version of Karger's algorithm~\cite{GawrychowskiMW24}, we obtain the following proposition.

\begin{proposition}\label{prop:mincut}
\mincheapshort can be solved in $\Oh(m^2\log^2n\log M)$
time by a randomized algorithm for instances with  cost functions of maximum value $M$.
\end{proposition}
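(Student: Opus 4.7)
The plan is to invoke Theorem~\ref{thm:gen-alg} as a black box, with $\Pi$ being the \mincut{} problem, so that the universe is $U=E(G)$ (hence $|U|=m$) and the family $\mathcal{F}$ is the collection of all cut-sets of $G$. The theorem then reduces \mincheapshort{} to at most $|U|=m$ invocations of a standard \mincut{} subroutine, each on $G$ equipped with a truncated cost function $c_w$ for some $w\in C(U)$.

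The main ingredient I would plug in is the improved randomized minimum cut algorithm of \citet{GawrychowskiMW24}, which computes a minimum cut in an edge-weighted graph with integer weights bounded by $M$ in time $T(m,n,M)=\Oh(m\log^2 n\log M)$. To apply this algorithm inside the reduction, I only need to check that the subroutine inputs produced by Theorem~\ref{thm:gen-alg} still satisfy the weight bound. This is immediate from the definition of $c_w$: every edge cost is replaced by the larger of $c(e)$ and $w$, so $c_w(e)\in[w,M]$ whenever $c(e)\le M$ and $w\in C(U)\subseteq[0,M]$. Thus each of the $m$ subroutine calls runs within $\Oh(m\log^2 n\log M)$ time.

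Multiplying the per-call cost by the factor $|U|=m$ from Theorem~\ref{thm:gen-alg} yields the claimed $\Oh(m^2\log^2 n\log M)$ bound. The only mildly delicate point is the randomized success probability, since the correctness of the outer algorithm depends on each of the $m$ minimum-cut computations returning a true optimum: the GMW algorithm succeeds with high probability per call, and a standard union bound (or a logarithmic boost absorbed into the polylog factors) ensures the overall procedure outputs the correct value of $\opt_{\min}^{\mathrm{cheap}}$ with high probability. I do not anticipate any genuine obstacle beyond this bookkeeping; the proposition is essentially a direct corollary pairing Theorem~\ref{thm:gen-alg} with the fastest known randomized \mincut{} algorithm, in exact analogy with the derivation of Proposition~\ref{prop:minstcut} from \citet{ChenKLPGS23}.
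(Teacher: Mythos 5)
Your proposal matches the paper's derivation: Proposition~\ref{prop:mincut} is obtained there exactly by pipelining \Cref{thm:gen-alg} (with $\Pi$ being \mincut, so $|U|=m$ calls) with the $\Oh(m\log^2 n\log M)$-time randomized minimum-cut algorithm of \citet{GawrychowskiMW24}, giving $\Oh(m^2\log^2 n\log M)$. Your extra remarks on the truncated costs $c_w$ staying bounded by $M$ and on boosting the per-call success probability are correct bookkeeping that the paper leaves implicit.
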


Finally in this section, we note that \maxcut can be solved in
polynomial time on planar graphs~\cite{hadlock1975finding}, graphs
excluding $K_5$ as a minor~\cite{BARAHONA1983107}, and graphs of bounded
genus~\cite{galluccio2001optimization}.  We discuss this problem in
detail in the next section.  Here, we only mention that combining
\Cref{thm:gen-alg} and the results of \citet{LiersP12}, we obtain the following result for planar
graphs.

\begin{proposition}\label{prop:maxcut}
  \maxexpshort can be solved in $\Oh(n^{5/2}\log n\log M)$ time on
  planar graphs for instances with cost functions of maximum value $M$.
\end{proposition}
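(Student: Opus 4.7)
The plan is to pipeline \Cref{thm:gen-alg} with the planar \maxcut algorithm of \citet{LiersP12}, taking \textsc{Maximum $\Pi$} to be classical weighted \maxcut and the ground set $U$ to be the edge set $E(G)$ of a planar input graph~$G$. By \Cref{lem:general} in its maximization form, it suffices to compute $\opt_{\max}(U,c^w,\mathcal{F})$ for each $w$ in the set $C(U)$ of distinct edge costs, subtract $kw$, and return the largest nonnegative value obtained. Since $|C(U)| \le |U| = m$, the reduction makes at most $m$ calls to a classical \maxcut oracle.

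First I would enumerate the $\Oh(m)$ distinct weights $w \in C(U)$; this can be done in $\Oh(m\log m)$ preprocessing time by sorting. For each such~$w$, the capped cost function $c^w$ is still a nonnegative integer function bounded by $M$ (its only effect is to cap large weights at $w \le M$), so the modified instance is a valid input to a classical planar \maxcut algorithm. I would then invoke, as the inner subroutine, the Liers--Pardella planar \maxcut algorithm, which runs in time $\Oh(n^{3/2}\log n \log M)$ on planar graphs with integer weights bounded by~$M$.

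Multiplying the outer factor $\Oh(m) = \Oh(n)$ (using $m = \Oh(n)$ for planar graphs) by the inner running time $\Oh(n^{3/2}\log n\log M)$ yields the claimed total bound of $\Oh(n^{5/2}\log n\log M)$. Correctness is immediate from \Cref{lem:general}, which characterizes $\opt_{\max}^{\exp}(U,c,\mathcal{F},k)$ exactly as $\max\{0, \Opt_{\max}\}$.

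There is no genuinely hard step: the proposition is a direct corollary of \Cref{thm:gen-alg} once we plug in a fast planar \maxcut algorithm as the black box. The only minor point worth verifying is that the weight modification $c \mapsto c^w$ preserves all hypotheses required by the subroutine, namely nonnegative integrality and the bound by~$M$; both are manifest from the definition of $c^w$ in \Cref{sec:general-techniques}.
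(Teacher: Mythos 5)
Your proposal is correct and matches the paper's own argument: the paper obtains \Cref{prop:maxcut} exactly by pipelining \Cref{thm:gen-alg} (i.e., \Cref{lem:general} in its maximization form, with $U=E(G)$ and at most $m=\Oh(n)$ threshold values $w$) with the planar \maxcut algorithm of \citet{LiersP12} as the black-box subroutine. Your added checks that $c^w$ stays nonnegative, integral, and bounded by $M$ are fine and do not change the route.
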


\subsection{Lower Bounds for \minstexp}
\label{sec:lower-bound-stexp}
In this section we prove that \minstcheap{} is \NP{}-complete and \W{}-hard even in very restricted cases.
Notice that the problem can be solved in
polynomial time when the edges of the input graph have unit costs---to
obtain an optimum solution, it is sufficient to find the minimum
$s$-$t$-cut in the considered graph and subtract~$k$ from its size.  We
prove that the problem becomes \textsf{NP}-hard when we allow edges to
have cost 1 or 2 and show that \minstexpshort is \textsf{W[1]}-hard when
parameterized by~$k$, the number of edges to delete, when we allow
polynomial (in the graph size) costs.

\thmhardness*

\begin{proof}
  First, we show \textsf{NP}-hardness and then explain how to modify the
  proof for the second claim.  We reduce from the \textsc{Clique}
  problem which asks, given a graph $G$ and a positive integer $k$,
  whether $G$ contains a clique of size $k$.  This problem is well-known
  to be \textsf{NP}-complete~\cite{GareyJ79} and \textsf{W[1]}-complete
  when parameterized by $k$~\cite{DowneyF13}.  Furthermore
  (see~\cite{MathiesonS12}), it is known that these computational lower
  bounds hold for regular graphs.  Notice that it is sufficient to prove
  \Cref{thm:w-one-hard} for multigraphs; to show the results for simple
  graphs, it is sufficient to subdivide each edge once and assign to the
  obtained two edges the same cost as the cost of the original edge.

  Let $(G,k)$ be an instance of the \textsc{Clique} problem where $G$ is
  a $d$-regular graph.  We set $p=(d-k+1)k$ and $q=2p+2$.  Then we
  construct the multigraph $G'$ with edge costs as follows.
\begin{itemize}
\item[(i)] Construct a copy of $G$, and assign cost one to each edge of
  $G$.
\item[(ii)] Add a vertex $s$, and for each $v\in V(G)$, add $q$ edges
  $sv$ of cost two.
\item[(iii)] Add a vertex $t$, and for each $v\in V(G)$, add $p+q+1$
  edges $tv$ of cost one.
\end{itemize}
We use $c$ to denote the cost function for the obtained instance of
\minstexpshort.  We define $k'=kq$ and set $\beta=p+(n-k)(p+q+1)$.  We
claim that $(G,k)$ is a yes-instance of \textsc{Clique} if and only if
there is an $s$-$t$-cut $(A,B)$ of $G'$ with
$\coste_{k'}(A,B)\leq \beta$.

Suppose that $G$ has a clique $K$ of size $k$.  We define
$A=\{s\}\cup (V(G')\setminus K)$ and $B=V(G')\setminus A$.  Since
$s\in A$ and $t\in B$, we have that $(A,B)$ is an $s$-$t$-cut.  The $k'$
most expensive edges in $E(A,B)$ are the edges $sv$ for $v\in K$; each
of them is of cost two.  The set $R=\{sv\mid v\in K\}$ contains $qk=k'$
edges.  Because $G'$ is $d$-regular,
$c(E(A,B))=qk+(d-k+1)k+(p+q+1)(n-k)=c(R)+p+(p+q+1)(n-1)=c(R)+\beta$.
Thus, $c(E(A,B))-c(R)=\beta$.  This means that
$\coste_{k'}(A,B)\leq \beta$.

For the opposite direction, assume that $\coste_{k'}(A,B)\leq \beta$ for
an $s$-$t$-cut $(A,B)$ of $G'$ assuming that $s\in A$ and $t\in B$.  Let
$X=V(G)\cap B$ and $Y=V(G)\cap B$.  We show that $X$ is a clique of $G$
of size $k$.

First, we show that $|X|=k$.  The proof is by contradiction.  Let
$r=|X|$ and consider two cases.

Suppose that $r>k$.  Then $k'$ most expensive edges are edges $sv$ for
$v\in X$.  The total number of edges of this type in $E(A,B)$ is
$qr>qk=k'$.  Let $R\subseteq E(A,B)$ be a set of $k'$ edges $sv$ for
$v\in X$.  Then $E(A,B)$ contains $(r-k)q$ edges
$sv\in E(A,B)\setminus R$ with $v\in X$ and the total cost of these
edges is $2(r-k)q$.  Also $E(A,B)$ contains $(n-r)(p+q+1)$ edges $tv$
for $v\in Y$ of cost one.  The total cost of these edges is
$(n-r)(p+q+1)$.  Hence,
\begin{align*}
c(E(A,B))-c(R) &\geq 2(r-k)q+(n-r)(p+q+1)\\
&= (r-k)(q+p+1)+(r-k)(q-p-1)\\
&\quad +(n-r)(p+q+1)\\
&= (r-k)(q-p-1)+(n-k)(p+q+1)\\
&\geq (q-p-1)+(n-k)(p+q+1)\\
&= p+1+(n-k)(p+q+1)>\beta
\end{align*}
contradicting that $\coste_{k'}(A,B)\leq \beta$.  Thus, $r\leq k$.

Suppose that $r<k$.  Then $k'$ most expensive edges are $rq$ edges $sv$
for $v\in X$ of cost two and $k'-rq=(k-r)q$ edges of cost one.  By the
construction of $G'$, $E(A,B)$ contains every edge $tv$ for $v\in Y$ and
the total cost of these edges is $(n-r)(p+q+1)$.  Therefore,
\begin{align*}
\coste_{k'}(A,B) &\geq (n-r)(p+q+1)-(k-r)q\\
&= (n-r)(p+q+1)-(k-r)(p+q+1)\\
&\quad +(k-r)(p+1)\\
&\geq (n-k)(p+q+1)+p+1>\beta;
\end{align*}
a contradiction.  We conclude that $|X|=k$.

Now we show that $X$ is a clique of $G$.  Because $|X|=k$, the set of
$k'$ the most expensive edges of $E(A,B)$ is $R=\{sv\mid v\in X\}$.
Also $E(A,B)$ contains $(n-k)(p+q+1)$ edges $tv$ for $v\in Y$.  Because
these edges are of cost one, this implies that
$c(E(A,B))-c(R)=|E(X,Y)|+(n-k)(p+q+1)\leq \beta=p+(n-k)(p+q+1)$.  Thus,
$|E(X,Y)|\leq p=(d-k+1)k$.  Because $G$ is a $d$-regular graph, we
obtain that $G[X]$ contains at least $\frac{1}{2}k(k-1)$ edges.  Since
$|X|=k$, we have that $X$ is a clique of size $k$.  This concludes the
\textsf{NP}-hardness proof for \minstexpshort.

To show that \minstexpshort is \textsf{W[1]}-hard when parameterized by $k$, we modify the above construction of $G'$ by replacing (ii) by (ii$^*$):
\begin{itemize}
\item[(ii$^*$)] Add a vertex $s$, and for each $v\in V(G)$, add  edge $sv$ of cost $q$.
\end{itemize}
Then we use the same arguments as in the \textsf{NP}-hardness proof to show that
$(G,k)$ is a yes-instance of \textsc{Clique} if and only if there is an $s$-$t$-cut $(A,B)$ of $G'$ with
$\coste_{k}(A,B)\leq \beta$.
This concludes the proof.
\end{proof}

\section{Graphs of Bounded Genus}\label{sec:genus}

In this section, we prove that all eight variants of the \emph{cuts with
  $k$ free edges} problems can be solved in polynomial time in graphs of
bounded genus, when restricted to polynomially-bounded integer edge
costs.  That is, our algorithms depend on $M$ polynomially, where $M$ is
the total edge cost.  We remark that our algorithms require an embedding
of $G$ into an orientable surface of genus $g$ given as an input; such
embeddings are known to be computable in time $2^{g^{\Oh(1)}} n$ \cite{mohar_linear_1999,KawarabayashiMR08}.

We rely heavily on the work of \citet{galluccio2001optimization}.  They show that integral total cut
costs can be computed efficiently---through polynomial evaluation and
interpolation---once the corresponding embedding is known.  We outline
their approach in the proof of the following lemma.

\begin{algorithm}
  \caption{Algorithm resolving an instance $(G,c,k,\beta)$ of a problem $\Pi$. If $\Pi$ is \textsc{$s$-$t$-Cut}, $s$ and $t$ are also a part of the input. Additionally, an embedding $\phi$ of $G$ into an oriented surface of genus $g$ is given.}
  \label{alg:genus-universal}
  \begin{algorithmic}[1]
    \STATE{$e_1,e_2,\ldots, e_m\leftarrow $ ordering of $E(G)$ s.t.\ $c(e_i) \leq c(e_{i+1})$}
    \STATE{$M\leftarrow 1+\sum_{i=1}^m c(e_i)$}
    \FOR{$t\in[m+1]$}
    \STATE{$G',\phi'\leftarrow$ copy of $G,\phi$}
    \FOR{$i\in[m]$}
    \IF{$i<t$ {\normalfont\textbf{and}} $\Pi$ is \textsc{Free Cheap} {\normalfont\textbf{or}}\\{\normalfont\textbf{if}} $i\ge t$ {\normalfont\textbf{and}} $\Pi$ is \textsc{Free Expensive}}
    \STATE{$c'(e_i)\leftarrow M$}
    \ELSE
    \STATE{  $c'(e_i)\leftarrow c(e_i)$}
    \ENDIF
    \IF{$\Pi$ is \textsc{$s$-$t$-Cut}}
    \STATE{add edge $e_{m+1}$ between $s$ and $t$ in $G'$}
    \STATE{attach a new handle between $s$ and $t$ in $\phi'$ and embed $e_{m+1}$ in it in $\phi'$}
    \STATE{$c'(e_{m+1})\leftarrow m\cdot M + 1$}
    \STATE{$T\leftarrow c'(e_{m+1})$}
    \ELSE
    \STATE{$T\leftarrow 0$}
    \ENDIF
    \ENDFOR
    \STATE{$C\leftarrow$ output of the algorithm of \Cref{lemma:genus_cut_interpolation} applied to $G'$, $c'$ and $\phi'$}
    \FOR{$w' \in \{w-T: w\in C, w\ge T\}$}
    \STATE{$\beta'\leftarrow w'\pmod  M$ \COMMENT{reimbursed cost}}
    \STATE{$k'\leftarrow \lfloor \frac{w'}{M}\rfloor$ \COMMENT{number of reimbursed edges}}
    \IF{$\beta'\le \beta$ {\normalfont \textbf{and}} $k'\le k$ {\normalfont \textbf{and}} $\Pi$ is \textsc{Min} {\normalfont \textbf{or}}\\ {\normalfont \textbf{if}} $\beta'\ge\beta$ {\normalfont \textbf{and}} $k'\ge k$ {\normalfont \textbf{and}} $\Pi$ is \textsc{Max}}
    \RETURN{\normalfont\textsc{Yes}}\label{line:encounter-cut}

    \ENDIF
    \ENDFOR
    \ENDFOR
    \IF{$\beta=0$ and $\Pi$ is \textsc{Max}}{\RETURN{\normalfont\textsc{Yes}}}
    \ENDIF
    \RETURN{\normalfont\textsc{No}}
  \end{algorithmic}
\end{algorithm}

\begin{lemma}\label{lemma:genus_cut_interpolation}
    There is a randomized algorithm that,
    given an $n$-vertex graph $G$ with positive integral edge costs, along with its embedding $\phi$ in an orientable surface of genus $g$,
in time
    $(4^g \cdot M \cdot n^{1.5}+M^2)\cdot \log^{\mathcal{O}(1)}(n+M),$
    where $M$ is the total edge cost, computes a set $C\subseteq \{0,1,\ldots, M\}$, that satisfies
    \begin{itemize}
        \item if $w\in C$, then there is a cut of cost $w$ in $G$;
        \item if there is a cut of cost $w$ in $G$, then $w\in C$ with probability at least $\frac{1}{2}$.
    \end{itemize}
    The algorithm can be derandomized for the cost of additional multiplier of order $n$ in the running time.
\end{lemma}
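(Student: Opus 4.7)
The plan is to compute, modulo a small random prime $p$, the full \emph{cut generating polynomial}
\[
P(x) \;=\; \sum_{(A,B)\text{ a cut of }G} x^{c(A,B)} \;=\; \sum_{w=0}^{M} a_w\, x^w,
\]
where $a_w$ is the number of cuts of cost exactly $w$, and then to return $C = \{w : a_w \bmod p \ne 0\}$. Since $\deg P \le M$, the polynomial is determined by its values at any $M+1$ distinct points, so the algorithm will: (i) fix evaluation points $x_0, \ldots, x_M$; (ii) compute $P(x_i) \bmod p$ for each $i$; and (iii) interpolate to recover all coefficients $a_w \bmod p$. Standard univariate interpolation on $M+1$ points costs $\Oh(M^2 \log^{\Oh(1)}(n+M))$ operations modulo $p$, and the final test ``is $a_w \bmod p$ zero?'' yields $C$.

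The heart of the argument is evaluating $P(x_i)$ for a bounded-genus graph, which is exactly the setting of the generating-function technique of \citet{galluccio2001optimization}. Using the embedding $\phi$, one sets up a Kasteleyn-type orientation of an auxiliary graph derived from $G$; on an orientable surface of genus $g$, the cut generating function is expressed as a signed combination of $4^g$ Pfaffians, one per parity class in $H_1$ modulo $2$. For a fixed evaluation point $x_i$, each such Pfaffian is the Pfaffian of a skew-symmetric matrix of order $\Oh(n)$ whose nonzero entries are monomials $\pm x_i^{c(e)}$. Since bounded-genus graphs admit $\Oh(\sqrt{gn})$-separators, each such Pfaffian can be computed in $\Oh(n^{1.5} \log^{\Oh(1)}(n + p))$ time via nested dissection. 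Summing over the $4^g$ orientations and over the $M+1$ evaluation points gives evaluation time $4^g \cdot M \cdot n^{1.5} \cdot \log^{\Oh(1)}(n + p)$, matching the first term of the claimed bound.

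All arithmetic is performed modulo a random prime $p$ drawn uniformly from an interval containing at least $2n$ primes; concretely $p$ has $\Oh(\log(n+M))$ bits. Each coefficient $a_w$ is a nonnegative integer bounded by the total number of cuts $2^{n-1}$, so $a_w$ has at most $n-1$ distinct prime factors and hence $\Pr[p \mid a_w] \le 1/2$ whenever $a_w > 0$. This yields the one-sided guarantee required by the lemma: $a_w \bmod p = 0$ always when no cut of cost $w$ exists, and is nonzero with probability at least $1/2$ when such a cut exists, so the resulting $C$ satisfies both bullets. Derandomization is obtained by enumerating all $\Oh(n)$ primes in the range and taking the union of the $C$'s produced; this adds the $\Oh(n)$ factor stated in the lemma.

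The main technical hurdle is Step~(ii): justifying the Galluccio--Loebl Pfaffian representation of $P(x)$ on an arbitrary orientable surface, verifying that edge monomials $x_i^{c(e)}$ may be plugged directly into the Pfaffian entries without breaking the cancellation pattern that selects cuts, and confirming that the $\sqrt{gn}$-separator theorem for bounded-genus graphs drives the $n^{1.5}$ exponent through nested dissection applied to the resulting sparse skew-symmetric matrices. The remaining pieces---polynomial interpolation over $\mathbb{F}_p$, the union bound on prime divisors of $a_w$, and brute-force derandomization by prime enumeration---are standard.
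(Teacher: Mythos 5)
Your proposal follows essentially the same route as the paper's proof: evaluate the cut-generating polynomial $\sum_w a_w x^w$ at $M+1$ points modulo a random prime via the Galluccio--Loebl--Vondr\'ak bounded-genus algorithm, recover the coefficients by Lagrange interpolation over $\mathbb{F}_p$, bound the failure probability by the fact that $a_w\le 2^n$ has at most $n$ prime divisors while $p$ is chosen among $2n$ primes, and derandomize by trying all $\Oh(n)$ primes. The only detail worth making explicit is that the primes must exceed $M$ so that the $M+1$ evaluation points are distinct in $\mathbb{F}_p$, which your choice of $\Oh(\log(n+M))$-bit primes accommodates and the paper states outright.
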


\begin{proof}
  Consider a polynomial
  $$\mathcal{C}(G,x)=\sum_{ \substack{A\cup B= V(G)\\ A\cap
      B=\emptyset}}x^{c(E(A,B))}=\sum_{w=0}^M a_wx^w.$$ Clearly, $G$ has
  a cut of cost $w$ if and only if the coefficient $a_w$ of $x^w$ in
  $\mathcal{C}(G,x)$ is non-zero.

  \citet{galluccio2001optimization}
  showed that for any prime $p$ and integer $i\in \{0,\ldots, p-1\}$,
  the value $\mathcal{C}(G,i)$ modulo $p$ can be computed in time
  $4^g\cdot n^{1.5}\cdot\log^{\mathcal{O}(1)}(n+M+p)$ if an embedding
  $\phi$ of $G$ into a surface of genus $g$ is given.

We now describe our algorithm.
First, it computes a sequence of pairwise-distinct prime numbers
\(p_{1},p_{2},\dots ,p_{2n}\), each greater than \(M\).
    This is done by iterating a number $q$ starting from $M+1$, adding $q$ to the sequence if it is prime, increase $q$ and repeat until the sequence has exactly $2n$ primes.
    Then the algorithm picks $j\in [2n]$ uniformly at random and puts $p:=p_j$.
Next, the algorithm selects an index \(j\in[2n]\) uniformly at random and sets
\(p:=p_{j}\).
For every \(i\in\{0,1,\dots ,M\}\), it calls the Galluccio--Loebl--Vondrák
subroutine to evaluate
$
  c_{p,i} = \mathcal{C}(G,i)\bmod p.
$
Using Lagrange interpolation with  arithmetic modulo~\(p\), it then   recovers the coefficients
\(a_{p,0},a_{p,1},\dots ,a_{p,M}\).
Finally, it outputs the set
$
  C_{p}\;:=\;\{\,w \mid a_{p,w}\neq 0\,\}.
$

    We argue that the algorithm is correct.
    Since $a_{p,w}\equiv a_w\pmod {p}$, $C_p$ can only contain numbers corresponding to non-zero coefficients in $\mathcal{C}(G,x)$ that correspond to existent cut costs.
    Consider $w\in \{0,\ldots, M\}$ such that $G$ admits a cut of total cost $w$.
    To see that $w\in C_p$ with probability at least $\frac{1}{2}$, note that $w\notin C_p$ if and only if $p$ divides $a_w$.
    The number of cuts is bounded by $2^n$, so $a_w\leq 2^n$, and at most $n$ primes can divide $a_w$, while the algorithm choice is made uniformly at random from a set of $2n$ distinct primes.

    We now explain the running time bound of the algorithm.
    The first part of the algorithm evaluates primes $p_1,p_2,\ldots, p_{2n}$.
    By properties of the prime number distribution, $p_{2n}$ is bounded by $\mathcal{O}((M+n)\cdot \log(M+n))$.
    Since a primality test can be performed in polylogarithmic time, the time required to construct the sequence is $\mathcal{O}(p_{2n}\cdot\log^{\mathcal{O}(1)}(p_{2n}))$.

    Next part of the algorithm evaluates $\mathcal{C}(G,x)$ in $M+1$ distinct values of $x$ modulo $p$.
    Each evaluation is a call to the Gallucio-Loebl-Vondr\'{a}k algorithm.
    The total running time of this part is bounded by $M\cdot 4^g\cdot n^{1.5}\cdot\log^{\mathcal{O}(1)}(n+M)$, since $\log p=\mathcal{O}(\log (M+n))$.

    Third subroutine of the algorithm corresponds to the construction of
    a Lagrangian polynomial of degree $M$ in $\operatorname{GF}(p)$.
    This is done in $\mathcal{O}(M^2)$ arithmetic operations modulo $p$.
    That is, in $M^2\cdot \log^{\mathcal{O}(1)}(M+n)$ running time.

    To see the upper bound from the lemma statement, sum up the upper bounds for second and third parts.
    Upper bound for the first part is dominated by the upper bound for the second part.

    To derandomize the algorithm, one just needs to run the algorithm $n+1$ times for every choice of $p$ among $p_1, p_2,\ldots, p_{n+1}$, and report the union of all obtained solutions $C_p$.
\end{proof}

The main result of this section is the following, restating from the introduction.

\thmglobalgenus*

\begin{proof}
    We show that each of the eight problems can be solved via $m+1$ calls to the algorithm of \Cref{lemma:genus_cut_interpolation}.

    Consider any ordering $e_1, e_2,\ldots, e_m$ of edges of $G$ such that $c(e_i)\le c(e_{i+1})$ for every $i\in[m-1]$.
    For a fixed threshold $t\in[m+1]$, we say that $e_i$ is \emph{cheap} if $i< t$ and  \emph{expensive} otherwise.
    For each of the eight problems, there is at least one correct choice of $t$, such that in an optimal solution cheap edges and expensive edges agree with the choice of $t$.
    That is, for \textsc{Free Cheap} problems, there is an optimal cut that contains at most $k$ edges $e_i$ with $i<t$, and reimbursed cost of this cut equals the sum of $c(e_i)$ for $i\ge t$ taken over edges that it contains.
    For \textsc{Free Expensive} problems, edges with $i\ge t$ are reimbursed, while costs of edges with $i<t$ are summed up in the final reimbursed cost.

Building on this idea, we present in \Cref{alg:genus-universal} a universal algorithm that depends only slightly on the specifics of the problem~$\Pi$.
Given an instance $(G,c,k,\beta)$ of $\Pi$ and embedding~$\phi$ of~$G$ into an orientable surface, the algorithm first finds an ordering of edges as discussed above.
    Then, for each choice of $t\in[m+1]$, the algorithms constructs a new graph~$G'$ with a new cost function~$c'$.
    Initially, $G'$ is a copy of $G$ and $\phi'$ is a copy of $\phi$, while $c'(e_i)=c(e_i)$ for edges that should not be reimbursed with respect to $t$ and $c'(e_i)=M$ for edges that should be reimbursed with respect to $t$.
    Depending on $\Pi$, these are either edges of form $e_i$ for $i<t$ or edges of form $e_i$ for $i\ge t$.
    Note that $M$ is slightly increased in a way that $M$ is strictly greater than the total edge cost.

    Then, if $\Pi$ is a \textsc{$s$-$t$-Cut} type of problem, the algorithm adds an extra edge $e_{m+1}$ between~$s$ and~$t$ in~$G'$.
    To extend $\phi'$ with $e_{m+1}$, the algorithm attaches a new handle between~$s$ and~$t$ and embeds $e_{m+1}$ in this handle in $\phi'$.
    This operation increases the surface genus by at most $1$, so $\phi'$ is an embedding of $G'$ in a surface of genus $g+1$.
    Then $e_{m+1}$ receives cost $c'(e_{m+1})=T$ for $T:=m\cdot M+1$.
    If $\Pi$ is a global cut type of problem, then $G'$, $\phi'$ remains the same, while the algorithm puts $T:=0$.

    As its next step, the algorithm runs the algorithm of \Cref{lemma:genus_cut_interpolation} as a subroutine applied to~$G'$ and~$c'$.
    A result of this run is a set $C$ of costs of some cuts of $(G',c')$.
    Then the algorithm looks for a total cost of form $T+k'\cdot M+\beta'$ in~$C$, for $ k'\ge 0$ and $
\beta'\le M$.
    If $k'\le k$ and $\beta'\le \beta$ (for $\Pi$ being a \textsc{Min} type of problem) or $k'\ge k$ and $\beta'\ge \beta$ (for $\Pi$ being a \textsc{Max} type of problem), the algorithm reports that the given instance is a yes-instance of $\Pi$ and stops.

    If the algorithm did not report a yes-instance for every choice of $t\in[m+1]$, then it finally reports that the given instance is a no-instance, with one exception: in case when $\beta=0$ and $\Pi$ is a \textsc{Max} type of problem, the algorithm reports a yes-instance.
    The description of the algorithm is finished.

    We claim that the algorithm recognizes yes-instances with high probability.

    \begin{claim}
        If $(G,c,k,\beta)$ is a yes-instance of $\Pi$, then the algorithm returns \textsc{Yes} with probability at least $\frac{1}{2}$.
    \end{claim}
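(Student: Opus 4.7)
The plan is to exhibit, for every yes-instance and every problem variant, one iteration of the outer loop (indexed by a specific threshold $t^*$) in which the call to the algorithm of \Cref{lemma:genus_cut_interpolation} receives a graph $G'$ whose cut structure encodes the sought solution, and in which the recovered cost $w\in C$ decodes cleanly into $(k',\beta')$ satisfying the inequalities checked on Line~\ref{line:encounter-cut}. Correctness of the claim then reduces immediately to the $\tfrac12$ success probability of \Cref{lemma:genus_cut_interpolation} applied in that one iteration.

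First I would dispose of the trivial cases. If $\Pi$ is a \textsc{Max} variant with $\beta=0$, the final fallback line returns \textsc{Yes} unconditionally, so one may assume $\beta>0$ there. Otherwise, fix an optimal feasible cut $(A,B)$ and list its cut edges as $e_{i_1},\dots,e_{i_\ell}$ in the global sorted order (hence by cost). If $\ell\le k$ (which can occur only for \textsc{Min} variants, since for \textsc{Max} with $\beta>0$ the discounted cost $0$ is infeasible), the whole cut should be reimbursed: take $t^*=m+1$ for \textsc{Free Cheap} and $t^*=1$ for \textsc{Free Expensive}. Otherwise $\ell>k$, and I choose $t^*=i_k+1$ for \textsc{Free Cheap} and $t^*=i_{\ell-k+1}$ for \textsc{Free Expensive}. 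This choice makes the cut $(A,B)$ contain exactly the $k$ reimbursed edges dictated by the discount rule, so its cost under $c'$ equals $kM+\beta^*$, where $\beta^*$ is either $\costc_k(A,B)$ or $\coste_k(A,B)$, the true discounted cost of $(A,B)$; by feasibility of $(A,B)$, $\beta^*$ satisfies the appropriate $\le\beta$ or $\ge\beta$ inequality. For the \textsc{$s$-$t$-Cut} variants one then adds $T$ to account for the bridge edge $e_{m+1}$ that belongs to any $s$-$t$-cut of $G'$.

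The remaining things to check are: (i) in iteration $t^*$, the set $C$ produced by \Cref{lemma:genus_cut_interpolation} applied to $(G',c',\phi')$ contains the cost $w=kM+\beta^*$ (plus $T$ for $s$-$t$-Cut variants) with probability at least $\tfrac12$, which is immediate since $(A,B)$ realizes this cost as a cut of $G'$; and (ii) the algorithm's decoding recovers $k'=k$ and $\beta'=\beta^*$ from $w'=w-T$. The latter is where I expect to focus attention, and it is the main obstacle: it relies on $M$ being chosen strictly larger than the total edge cost, so that $\beta^*<M$ and the base-$M$ decomposition $w'=k'M+\beta'$ with $0\le\beta'<M$ is unique. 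It also depends on the fact that non-$s$-$t$-cuts of $G'$ have $c'$-cost at most $mM<T=mM+1$, so the filter $w\ge T$ discards them and the decoding is applied only to genuine $s$-$t$-cuts. The orientable genus of $\phi'$ is at most $g+1$, so \Cref{lemma:genus_cut_interpolation} remains applicable within the running time promised by the theorem. Beyond this, the remaining work is bookkeeping: verifying, for each of the eight variants, that the $t^*$ prescribed above really does produce the claimed clean decomposition with the inequalities pointing in the right direction.
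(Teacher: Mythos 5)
Your proposal is correct and follows essentially the same route as the paper: fix a witness cut, pick the threshold $t$ at the position of its $k$-th relevant edge in the global ordering (the paper writes this uniformly as $t=j_{\min\{q,k\}}+1$, handling small cuts and the $\beta=0$ Max case separately, while you split those cases explicitly), observe that the cut's $c'$-cost is $T+k'M+\beta'$ with $k',\beta'$ satisfying the checked inequalities, and invoke \Cref{lemma:genus_cut_interpolation} for the probability-$\frac{1}{2}$ guarantee. The decoding points you flag ($M$ strictly exceeding the total cost and the $w\ge T$ filter) are exactly the facts the paper relies on implicitly, so no gap remains.
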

    \begin{proof}
    First, we show the claim for $\Pi$ being a \textsc{Min} type and \textsc{Free Cheap} type of problem.
    To see this, let $(A,B)$ be a cut that certifies that the instance is a yes-instance.
    Let $S$ be the set of edges of this cut in $G$ and let $e_{j_1}, e_{j_2}, \ldots, e_{j_q}$ be the edges of $S$ in an order corresponding to the ordering found by the algorithm.
    Consider $t=j_{\min\{q,k\}}+1$.
    The reimbursed cost of $(A,B)$ equals $w=\sum_{e_i\in S, i\ge t}c(e_i)\le\beta<M$.
    Consider the cost of $(A,B)$ in $G'$ constructed for this choice of $t$.
    Let $S'$ be the set of edges of $(A,B)$ in $G'$.
    If $\Pi$ is \textsc{$s$-$t$-Cut}, then $e_{m+1}\in S'$, since $(A,B)$ is a cut between $s$ and $t$.
    Since, $S'\setminus \{e_{m+1}\}=S$, we have that the cost of $S'\setminus \{e_{m+1}\}$ is comprised of $\min\{q,k\}$ reimbursed edges that have cost $M$ in $(G',c')$, and other edges that have the same costs as in $(G,c)$ and constitute the reimbursed cost of $S$.
    Hence, the total cost of $(A,B)$ in $(G',c')$ equals $w'=T+\min\{q,k\}\cdot M+w$.
    With probability at least $\frac{1}{2}$, the algorithm of \Cref{alg:genus-universal} puts $w'$ in its resulting set $C$.
    Since $\min\{q,k\}\le k$ and $w\le \beta$, the algorithm correctly reports a yes-instance and terminates when it encounters $w'$ in $C$.

    When $\Pi$ is a \textsc{Max} type of problem, instances with $\beta=0$ are handled by the algorithm separately in its final step.
    We only have to consider instances with $\beta>0$.
    Then a solution cut $(A,B)$ always contains at least $k+1$ edges.
    Similarly to above, the cost of $S'$ in $(G',c')$ equals $w'=T+\min\{q,k\}\cdot M+w$, which now becomes $w'=T+k\cdot M+w$.
    The algorithm will encounter such $w'$ in $C$ with probability at least $\frac{1}{2}$ and report a yes-instance.

    For $\Pi$ being a \textsc{Free Expensive} type of problem the proof is symmetrical.
    \end{proof}

While we state \Cref{thm:global_genus} for all eight variants of discounted
cuts, some of these problems admit polynomial-time algorithms
on general graphs.  The first observation is that the variants of
``$\min -k$ cheap'' and ``$\max -k$ expensive'' are not significantly
different from their classic counterparts.  Furthermore, this type of
statement holds for optimization problems in the most general sense.

Consider the \textsc{Minimum $\Pi$} problem, whose task is, given a universe $U$, a cost function $c\colon U\rightarrow \mathbb{Z}_{\geq 0}$, and a family $\mathcal{F}$ of feasible subsets of $U$ (which may be given implicitly), to find the minimum cost of a feasible subset.
We define \textsc{Minimum $\Pi$ with $k$ Free Cheap Elements} as the problem whose input additionally includes an integer $k \geq 0$, and whose task is to find the minimum cost of a feasible subset, excluding the cost of the $k$ cheapest elements.
Similarly, we define \textsc{Maximum $\Pi$ with $k$ Free Expensive Elements} for the \textsc{Maximum $\Pi$} problem, where the objective is to find the maximum cost of a feasible subset while excluding the cost of the $k$ most expensive elements.

While \Cref{thm:global_genus} claims that eight problems admit polynomial-time algorithms, the algorithms are very similar and follow a common scheme.
In the proof below, we present a general algorithm that adjusts its constructions depending on a specific problem $\Pi$.

    Other than that, we have to show that the algorithm never returns \textsc{Yes} on no-instances.
    Assume that the algorithm returned \textsc{Yes} in line~\ref{line:encounter-cut} (the other yes-instance-return is trivially correct).
    By \Cref{lemma:genus_cut_interpolation}, there is a cut $(A,B)$ in $G'$ of total cost $w'=T+k'\cdot M+\beta'$.
    If $\Pi$ is \textsc{$s$-$t$-Cut}, then $(A,B)$ is an $s$-$t$-cut since $w'\ge T$.
    Also, $(A,B)$ contains exactly $k'$ edges of form $e_i$ for $i<t$ or $i\ge t$ depending on the reimbursement type of $\Pi$; the total cost of other edges equals $\beta'$.
    If $\Pi$ is \textsc{Min} type, then the total cost of $(A,B)$ without $k'\le k$ reimbursed edges in $(G,c)$ is exactly $\beta'$, hence with $k$ reimbursed edges the cost of $(A,B)$ can't become greater than $\beta'\le \beta$.
    Symmetrically, if $\Pi$ is \textsc{Max} type, then the total cost of $(A,B)$ is $\beta'$, if $k'\ge k$ edges are reimbursed, an with $k$ edges it can't become smaller.
    In either case, $(A,B)$ certifies that $(G,c,k,\beta)$ is a yes-instance of $\Pi$.
    The correctness of the algorithm follows.

    The running time required to find the ordering and construct $G'$, $c'$ and $\phi'$ is dominated by the running time of the algorithm of \Cref{lemma:genus_cut_interpolation}.
    Since the maximum cost in $c'$ is now bounded by $\Oh(m\cdot M)$, $G'$ is embedded in a surface of genus $g+1$, one call to the subroutine takes $$4^{g+1}\cdot n^{1.5}\cdot m \cdot M+m^2\cdot M^2$$
    running time up to a factor polylogarithmic in $n+M$.
    Our algorithm makes at most $m+1$ calls to the subroutine before it terminates.
    Hence, its total running time is upper-bounded by
    $$ \left(4^{g}\cdot n^{1.5} +m\cdot M\right)\cdot m^2\cdot M\cdot\log^{\Oh(1)}(n+M).$$

    Note that the derandomization of the algorithm of \Cref{lemma:genus_cut_interpolation} also derandomizes our algorithm for the cost of an additional multiplier of order $n$.
    This concludes the proof of \Cref{thm:global_genus}.
  \end{proof}

While \Cref{thm:global_genus} asserts that the eight problems admit polynomial-time algorithms, these algorithms share similarities and adhere to a common framework.

\newcommand{\Zpos}{\ensuremath{\mathbb{Z}_{> 0}}}

\subsection{Planar \minstexp}
\label{sec:planar-vital-links}
Now we prove \Cref{thm:planar-vital} which shows that \minstexpshort is
solvable in polynomial time on planar graphs with exponential edge costs.
Throughout the section, we assume that all graphs considered are plane, meaning
they come equipped with an embedding as the planarity can be tested and a planar embedding can be found (if it exists) in linear time, as shown by \citet{HopcroftT74}.

The crucial idea behind our algorithm is to switch to the dual problem.  Recall
that~$(A,B)$ is a minimal $s$-$t$-cut in a connected graph~$G$ if and only if
$C=\{e^*\in E(G^*)\mid e\in E(A,B)\}$ is a cycle of~$G^*$ such that~$s$ and~$t$
are in distinct faces of~$C$.
Observe that given an instance $(G,c,s,t,k,\beta)$
of \minstexpshort, an $s$-$t$-cut $(A,B)$ of minimum discounted cost $\coste_k(A,B)$
should be minimal.  Thus, the problem reduces to finding a cycle~$C$ in~$G^*$
of minimum discounted cost such that~$s$ and~$t$ are in distinct faces of~$C$.
Our first task is to establish a criteria that guarantees that~$s$ and~$t$ are
in distinct faces of a cycle of~$G^*$.

Let~$G$ be a plane graph.  We say that a path~$P$ in~$G$ \emph{crosses} a
cycle~$C^*$ of~$G^*$ in~$e\in E(P)$ if~$C^*$ contains the edge $e^*\in E(G^*)$
that is dual to~$e$.  The \emph{number of crosses} of~$P$ and~$C^*$ is the
number of edges of~$P$ where~$P$ and~$C^*$ cross.
We can make  the following observation.

\begin{observation}[\citet{BentertDFGK24}]\label{obs:cycle-sep}
  Let~$G$ be a plane graph, let~$s,t\in V(G)$, and let~$P$ be an $s$-$t$-path.
  For any cycle~$C^*$ of~$G^*$,~$s$ and~$t$ are in distinct faces of~$C^*$ if
  and only if the number of crosses of~$P$ and~$C^*$ is odd.
\end{observation}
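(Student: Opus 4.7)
The plan is to prove this as a discrete incarnation of the Jordan curve theorem, exploiting the geometry of the planar embedding. The key observation is that the cycle $C^*$ in $G^*$, realized in the embedding, is a simple closed curve $\gamma$ in the plane (or sphere), whose vertices are faces of $G$ and whose edges cross precisely those edges $e \in E(G)$ with $e^* \in E(C^*)$. By the Jordan curve theorem, $\gamma$ partitions the plane into two open connected regions. These are exactly the two ``faces of $C^*$'' referred to in the statement.

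First I would set up the geometric realization carefully. Fix a planar embedding of $G$ and the corresponding standard embedding of $G^*$ placing each vertex of $G^*$ in the interior of the corresponding face of $G$ and each dual edge $e^*$ crossing the primal edge $e$ transversally at a single interior point, while avoiding all other primal edges. Then the cycle $C^*$ of $G^*$ produces a closed curve $\gamma$ in the plane, and the only intersection points of $\gamma$ with the drawing of the primal graph $G$ are the transversal crossings at the midpoints of edges $e$ with $e^* \in E(C^*)$.

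Next I would analyze the $s$-$t$-path $P$ viewed as a curve $\pi$ in the plane from $s$ to $t$. Since $s$ and $t$ are vertices of $G$ and hence (by the construction above) not on $\gamma$, the curve $\pi$ meets $\gamma$ only at the transversal crossings at edges $e \in E(P)$ with $e^* \in E(C^*)$. Thus the number of intersection points of $\pi$ with $\gamma$ equals the number of crosses of $P$ and $C^*$ as defined in the paper. A standard parity argument for transversal intersections with a Jordan curve (walking along $\pi$, each crossing flips the side of $\gamma$ we are on) then shows that the endpoints $s$ and $t$ lie in the same face of $\gamma$ if and only if the number of crossings is even. Equivalently, they lie in distinct faces if and only if the number of crosses is odd.

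The main subtlety, and hence the main obstacle, is the justification of the parity argument in the piecewise-linear/combinatorial setting: one must argue that all crossings of $\pi$ with $\gamma$ are genuinely transversal (so no crossing is a tangency that fails to switch sides) and that there are no other intersection points hidden at vertices. This is handled by choosing the standard embedding of $G^*$ described above and noting that $P$ is a path in $G$ whose interior vertices are distinct from those faces of $G$ hosting vertices of $C^*$; once this is set up, the parity flip at each crossing is a purely local fact about two curves crossing transversally at a single point, and the claim follows from connectedness of the two open regions determined by $\gamma$.
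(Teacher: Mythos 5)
Your argument is correct and matches what the paper intends: the paper does not prove \Cref{obs:cycle-sep} itself but imports it from \citet{BentertDFGK24}, describing it precisely as the ``discrete Jordan curve theorem,'' which is exactly the route you take (realize the dual cycle as a Jordan curve crossing only the dual-crossed primal edges, then use the parity flip at each transversal crossing). Your handling of the one real subtlety---that the curve meets the drawing of $G$, and hence the path $P$, only at the transversal crossings counted by the paper's definition of ``crosses''---is the right thing to check, so nothing is missing.
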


\noindent
Thus, to solve \minstexpshort for $(G,c,s,t,k,\beta)$, we have to find a cycle~$C$
in~$G^*$ of minimum discounted cost with odd number of crosses with some
$s$-$t$-path in~$G$.
To solve this problem, we use the polynomial Dijkstra-style algorithm finding
an odd $s$-$t$-walk of minimum discounted cost in a general graph.  Let~$G$ be
a graph with a given cost function $c\colon E(G)\rightarrow \mathbb{Z}_{\geq 0}$.  Let also
$k\geq 0$ be an integer.  For a walk $W=v_0 \dots\ v_\ell$, let~$R$ be the
sequence of~$k$ most expensive edges $v_{i-1}v_i$ in~$W$; note that~$R$ may
include several copies of the same edge of~$G$.
Then we define $\coste_k(W)=\sum_{i=1}^\ell c(v_{i-1}v_i)-\sum_{e\in R}c(e)$.

We note that the following lemma holds for all graphs:

\begin{lemma}
  \label{lem:odd-walk-algorithm}
  Given a graph~$G$ with a cost function $c\colon E(G)\rightarrow \mathbb{Z}_{\geq 0}$, two
  vertices~$s$ and~$t$, and an integer $k\geq 0$, an odd $s$-$t$-walk~$W$ with
  minimum $\coste_k(W)$ can be found in $\Oh(mk \log n\log M)$ time where~$M$ is
  the maximum value of the cost function.
\end{lemma}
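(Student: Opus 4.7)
The plan is to reduce the discounted odd-walk problem to a standard shortest-path computation on an auxiliary product graph that tracks three pieces of state: the current vertex of $G$, the number of edge-uses already marked as ``free'', and the parity of the length of the walk so far.

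Formally, I would build $G'$ with vertex set $V(G) \times \{0, 1, \ldots, k\} \times \{0, 1\}$. For every edge $uv \in E(G)$, every $i \in \{0,\ldots,k\}$, and every $p \in \{0,1\}$, add a \emph{paid} arc $(u,i,p) \to (v,i,1-p)$ of weight $c(uv)$; and for every $i \in \{0,\ldots,k-1\}$ and $p \in \{0,1\}$, add a \emph{free} arc $(u,i,p) \to (v,i+1,1-p)$ of weight $0$. Then I would run Dijkstra's algorithm from $(s,0,0)$ in $G'$ and return the minimum over $i \in \{0,\ldots,k\}$ of $d[(t,i,1)]$, together with the corresponding path reinterpreted as an odd $s$-$t$-walk in $G$.

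Correctness rests on a bijection between $(s,0,0)$-to-$(t,j,1)$ paths in $G'$ and pairs $(W, R)$, where $W$ is an odd $s$-$t$-walk in $G$ and $R$ is a multiset of $j$ of its edge-uses designated as free; the weight of such a path equals $\sum_{e \in W} c(e) - \sum_{e \in R} c(e)$. For a fixed walk $W$, the minimum of this quantity over $R$ with $|R| \leq k$ is attained by marking (with multiplicity) the $k$ most expensive edge-uses of $W$ (or all of them, if $|W| \leq k$), which equals $\coste_k(W)$ by definition. Hence Dijkstra jointly optimizes over the walk and the selection of free edges, returning $\min_W \coste_k(W)$.

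The main (though minor) delicacy is a one-line exchange argument ensuring the joint optimization behaves as claimed: in any weight-minimizing path, the algorithm never marks a cheap edge-use as free while leaving a more expensive edge-use in the same walk paid, since swapping the two marks would strictly decrease the total weight and contradict optimality of the Dijkstra-found path. For the running time, $|V(G')| = O(nk)$ and $|E(G')| = O(mk)$, and Dijkstra with an appropriate heap implementation then yields the claimed $\Oh(mk\log n\log M)$ bound.
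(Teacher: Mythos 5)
Your proposal is correct and is essentially the paper's own argument in explicit form: the paper runs a Dijkstra-style dynamic program over states $(v,k',p)$ recording the current vertex, number of free edge-uses, and parity, which is exactly Dijkstra on your product graph $V(G)\times\{0,\dots,k\}\times\{0,1\}$ with paid and zero-weight ``free'' arcs. The justification that, for each fixed walk, the optimal marking is its $k$ most expensive edge-uses (so the joint minimization yields $\min_W \coste_k(W)$) and the $\Oh(mk\log n\log M)$ bound match the paper's reasoning.
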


\begin{proof}[Proof \Cref{lem:odd-walk-algorithm}]

  We present a Dijkstra-style dynamic programming algorithm
  (\Cref{alg:odd-walk}).  The DP table $\dist(k', p, v) = \ell$ records the
  minimum length~$\ell$ of a walk from~$s$ to~$v$ that uses~$k'$ free edges and
  has parity $p \in \{0,1\}$.

  \medskip
  \noindent
We work with \emph{vectors} of the form
  $(\ell, k', p, v)$, which denotes that the shortest path from~$s$ to~$v$
  using a number of edges congruent to~$p$ where~$k'$ edges are free, has cost~$\ell$.

  \begin{claim}
    $\dist(k', p, v) = \ell$ is correct when the algorithm terminates.  In
    addition, the first time $(\circ, k', p, v)$ is popped, its value is
    correct.
  \end{claim}
  \begin{proof}[Proof of claim]
    Notice that $(0,0,0,s)$ is the first vector to be popped, and is correct.  Suppose that $(\ell, k, p, v)$ is the first time any $(\circ, k, p, v)$ vector is popped and is incorrect, and furthermore, is the first such incorrect vector to be popped.

    Then there either (a) exists a walk from~$s$ to~$v$ using~$k$ free
    edges with parity~$p$ with length less than~$\ell$, or (b) there is
    no such $s$-$v$-walk with length~$\ell$.

    In either case, $(\ell, k, p, v)$ was necessarily pushed onto the queue by
    a neighbor $u \in N(v)$ whose active vector at the time was $(\ell', k' \overline p, u)$, which has either (i)
    $(\ell - w(u,v), k, \overline p, u)$, or (ii) $(\ell, k-1, \overline p, u)$.
Since  $(\ell', k', \overline p, u)$ was popped prior to $(\ell, k, p, v)$, it is correct by assumption, hence
    $(\ell, k, p, v)$ is correct as well, contradicting the assumption.

    It follows that the first (and only) time we write $(k, p, v)$ to the
    $\dist$ map, the value is correctly written.
  \end{proof}

  We now analyze the running time of \Cref{alg:odd-walk}, aiming to show it is bounded by $\Oh(mk \log n \log M)$.
If we can bound the number of iterations
  the of the while loop to~$t$, we are done, since the work done inside the
  loop then is bounded by $\Oh(\deg(v) \cdot \log t)$ where~$v$ is the vertex
  popped on Line~5.

  \begin{claim}
    $v$ is pushed at most $2k\deg(v)$ many times.
  \end{claim}
  \begin{proof}
    When a neighbor $u \in N(v)$ pushes~$v$ onto the queue,~$u$ is popped off,
    and by the previous claim,~$u$s vector is correct for its current~$k$
    value, and hence will not push~$v$ more than that time.  When it pushes~$v$
    onto the queue, it pushes it at most twice, one with~$k'=k$ and one with
    $k'=k+1$, hence~$v$ is pushed at most~$2k\deg(v)$ many times.
  \end{proof}

  From the claim above, each vertex is popped at most $4k\deg(v)$ many times,
  hence each vertex is pushed at most $4k\deg(v)$ times.  In total
  $8k m = \Oh(m k)$ many push and pop operations are performed, each running in
  time $\Oh(\log mk) = \Oh(\log n)$.  Since the remaining computations are all
  $\Oh(\log M)$, the total running time is $\Oh(mk \log n \log M)$.
\end{proof}

We also use the following folklore observation.

\begin{observation}\label{prop:simple-odd-cycle}If $G=(V,E)$ contains an odd closed walk $W=v_0\dots v_\ell$, then it also
  contains an odd cycle~$C$ that is part of~$W$, that is, $V(C)\subseteq \{v_0,\ldots,v_\ell\}$
  and for every $e\in E(C)$, there is $i\in[\ell]$ such that $e=v_{i-1}v_i$.
\end{observation}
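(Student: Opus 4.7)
The plan is to prove this folklore statement by induction on the length $\ell$ of the odd closed walk $W = v_0 v_1 \dots v_\ell$, with the invariant that the cycle we produce has its vertex set contained in $\{v_0,\ldots,v_\ell\}$ and uses only edges of the form $v_{i-1}v_i$ from $W$. The base case covers $\ell = 1$: then $W$ is a self-loop at $v_0=v_1$, which is itself an odd cycle contained in $W$.

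For the inductive step, first check whether $W$ is already a cycle, i.e., whether $v_0,v_1,\ldots,v_{\ell-1}$ are pairwise distinct (recall $v_\ell=v_0$). If yes, we are done. Otherwise there exist indices $0 \le i < j \le \ell-1$ with $v_i = v_j$. Using this repetition, I would split $W$ into the two closed subwalks
\[
W_1 \;=\; v_i\, v_{i+1}\,\ldots\, v_j \qquad\text{and}\qquad W_2 \;=\; v_0\,v_1\,\ldots\, v_i\, v_{j+1}\,\ldots\, v_\ell,
\]
of lengths $j-i$ and $\ell - (j-i)$ respectively. Both are genuine closed walks since $v_i = v_j$ and $v_0 = v_\ell$, and both have positive length (the former because $j > i$, the latter because $(i,j) \neq (0,\ell)$, as $j \le \ell-1$). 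Each has length strictly less than $\ell$. Because their lengths sum to the odd number $\ell$, exactly one of $W_1,W_2$ is odd; call it $W'$. By the inductive hypothesis applied to $W'$, there is an odd cycle $C$ whose vertices and edges are drawn from $W'$; since $W'$ reuses only vertices and edges of $W$, the cycle $C$ satisfies the required containment.

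I do not expect any real obstacle here: the only care needed is bookkeeping to confirm that both $W_1$ and $W_2$ have length in the range $[1,\ell-1]$ so that strong induction applies, and to note that the containment property propagates through the splitting step. Everything else is immediate from the definitions of walk, closed walk, and cycle given in the preliminaries.
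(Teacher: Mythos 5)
Your proof is correct, but it takes a different route from the paper. The paper's argument is a two-line reduction to bipartiteness: it forms the subgraph $B$ on the vertices $\{v_0,\dots,v_\ell\}$ with exactly the edges traversed by $W$, observes that a bipartite graph contains no closed odd walk, concludes that $B$ is non-bipartite, and then invokes the standard characterization that a non-bipartite graph contains an odd cycle (necessarily built from edges of $W$, since $E(B)$ contains nothing else). Your induction with splitting at a repeated vertex $v_i=v_j$ into two closed subwalks whose lengths sum to the odd number $\ell$ is essentially the classical proof of that very characterization, carried out directly on $W$; it is therefore more self-contained (no appeal to bipartiteness) and explicitly constructive, which is a mild bonus given that the algorithm in the paper ultimately returns the cut corresponding to the extracted cycle. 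Your bookkeeping is sound: both subwalks are genuine closed walks in $G$ of length in $[1,\ell-1]$, exactly one is odd, and the containment of vertices and edges propagates. The only fine point worth being aware of is the length-$1$ base case: a closed walk of length $1$ forces a self-loop, which under the paper's definitions (and in the dual graphs where the observation is applied) is legitimately an odd cycle, and in simple graphs this case is simply vacuous, so your induction is safe either way. The paper's proof buys brevity; yours buys elementarity and an explicit extraction procedure.
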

\begin{proof}
  Let $W$ be a closed odd walk and consider the graph~$B$ with
  $V(B)=\{v_0,\ldots,v_\ell\}$ and
  $E(B)=\{e\in E(G)\mid e=v_{i-1}v_i\text{ for some }i\in[\ell]\}$, i.e., the
  graph containing the vertices and all edges in the walk.  Since~$W$ is a
  closed odd walk,~$B$ is not bipartite (since there are no closed odd walks in
  a bipartite graph), hence~$B$ has an odd cycle~$C$.  This completes the
  proof.
\end{proof}

\begin{algorithm}
\caption{Cheapest odd walk with $k$ free edges}\label{alg:odd-walk}
\begin{algorithmic}[1]
\STATE{ Create a priority queue $\PQ$ }
\STATE{ $\PQ.\text{push}(0,0,0,s)$ }
\STATE{ $\dist(v,k,p) \gets \infty$ \hfill \COMMENT{all values initialized to $\infty$} }

\WHILE{$\PQ \neq \emptyset$}
  \STATE{ $(\ell, k', p, v) \gets \PQ.\text{pop}()$ }
  \IF{$\dist(k', p, v) \le \ell$}
    \STATE{ \textbf{continue} \hfill \COMMENT{discard sub-optimal vector} }
  \ENDIF
  \STATE{ $\dist(k', p, v) \gets \ell$ }
  \FOR{$u \in N_G(v)$}
    \STATE{ $\PQ.\text{push}(\ell + w(u,v), k', \overline{p}, u)$ }
    \IF{$k' < k$}
      \STATE{ $\PQ.\text{push}(\ell, k'+1, \overline{p}, u)$ }
    \ENDIF
  \ENDFOR
\ENDWHILE
\end{algorithmic}
\end{algorithm}

Now we are ready to prove \Cref{thm:planar-vital} which we restate:

\thmplanarcutp*

\begin{proof}
  Let $G$ be a plane graph with a cost function
  $c\colon E(G)\rightarrow \mathbb{Z}_{\geq 0}$, and let~$s$ and~$t$ be two distinct
  vertices, and $k\geq 0$ be an integer.  We assume that~$G$ is connected as
  \minstexp is trivial if~$s$ and~$t$ are in distinct connected components.

  We find an arbitrary $s$-$t$-path~$P$ in~$G$.  Then we construct the dual
  graph~$G^*$ with the corresponding cost function~$c^*$.  For each edge
  $e^*=uv\in E(G^*)$ such that the dual edge $e\notin E(P)$, we subdivide this
  edge, that is, we introduce a new vertex~$w$ and replace~$uv$ with~$uw$ and
  $wv$.  Denote the obtained graph~$G^\star$.  We define the cost function
  $c^\star\colon E(G^\star)\rightarrow\mathbb{Z}_{\geq 0}$ by setting
  $c^\star(e^*)=c^*(e^*)=c(e)$ for $e\in E(P)$, and we set
  $c^\star(uw)=c^*(uv)$ and $c^\star(wv)=0$ if~$uw$ and~$wv$ were obtained by
  subdividing~$e^*=uv$.  The construction of~$G^\star$ and the definition of
  the cost function~$c^\star$ immediately imply the following property.

\begin{claim}
  The graph~$G^*$ has a cycle~$C^*$ with odd number of crossing of~$P$ if and
  only if~$G^\star$ has an odd cycle~$C^\star$.  Moreover, $\min\coste_k(C^*)$
  taken over all cycles~$C^*$ with odd number of crossing of~$P$ is the same as
  $\min\coste_k(C^\star)$ where the minimum is taken over all odd
  cycles~$C^\star$ in~$G^\star$.
\end{claim}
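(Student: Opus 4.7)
The plan is to exhibit a natural correspondence $\Phi$ between cycles of $G^*$ and cycles of $G^\star$ that simultaneously translates ``odd number of crossings with $P$'' into ``odd length'' and preserves the discounted cost. Given a cycle $C^*$ in $G^*$, I would define $\Phi(C^*)$ as the cycle of $G^\star$ obtained by replacing every edge $e^* = uv \in E(C^*)$ with $e \notin E(P)$ by the subdivided path $u \to w_{e^*} \to v$ through the new vertex $w_{e^*}$, and leaving edges dual to $E(P)$ untouched. Conversely, since every subdivision vertex has degree exactly $2$ in $G^\star$ (it is incident only to the two halves of $e^*$), any cycle of $G^\star$ passing through $w_{e^*}$ is forced to use both halves of $e^*$; this lets me invert $\Phi$ by collapsing each such pair of halves back to its parent edge. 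Thus $\Phi$ will be a bijection between cycles of $G^*$ and cycles of $G^\star$.

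For the parity, I would argue as follows. If $C^*$ uses $a$ edges dual to edges of $P$ and $b$ edges whose duals lie outside $E(P)$, then $|E(\Phi(C^*))| = a + 2b \equiv a \pmod{2}$. By the definition of ``crossing'', $a$ is precisely the number of edges in which $C^*$ crosses $P$. Hence $C^*$ crosses $P$ an odd number of times if and only if $\Phi(C^*)$ is an odd cycle of $G^\star$, which gives the ``if and only if'' part of the claim.

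For the discounted cost, I would observe that each subdivided edge $e^*$ of cost $c(e)$ contributes one half-edge of cost $c(e)$ and one half-edge of cost $0$ to $G^\star$, while each edge of $C^*$ dual to $P$ contributes itself with unchanged cost. Consequently, the multiset of positive-cost edges of $\Phi(C^*)$ is in cost-preserving bijection with $E(C^*)$, and the extra zero-cost halves contribute nothing to any partial sum. Therefore, one may choose the $k$ most expensive edges of $\Phi(C^*)$ to correspond exactly to the $k$ most expensive edges of $C^*$, yielding $\coste_k(\Phi(C^*)) = \coste_k(C^*)$; the degenerate case $|E(C^*)| \le k$ gives $\coste_k = 0$ on both sides by convention. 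Combined with the bijection, this establishes the equality of the two minima.

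The main delicate point is the well-definedness and injectivity of the inverse map, which rests on the structural observation that every subdivision vertex has degree exactly $2$ in $G^\star$; this is what rules out spurious odd cycles of $G^\star$ with no counterpart in $G^*$, and in particular guarantees that each cycle of $G^\star$ truly arises from a unique cycle of $G^*$. Once this observation is in hand, the parity computation and the cost-preservation argument are both essentially immediate bookkeeping.
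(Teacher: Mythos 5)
Your argument is correct and is exactly the correspondence the paper has in mind: the paper states the claim as following "immediately" from the construction, and your explicit bijection (using that each subdivision vertex has degree~$2$), the parity count $a+2b\equiv a\pmod 2$, and the observation that the added zero-cost halves change neither the total nor the top-$k$ sum simply spell out that intended argument. No gaps; this matches the paper's approach.
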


Thus, we reduced our initial task to finding an odd cycle in~$G^\star$ of
minimum discounted cost.  To find such a cycle, we use the algorithm
from~\Cref{lem:odd-walk-algorithm}.  We go trough every vertex $x\in V(G^*)$,
and run the algorithm to find an odd $x$-$x$-walk~$W$ with minimum
$\coste_k(W)$.  Among all such walks, we find a closed walk~$W^\star$ of
minimum discounted cost.
By \Cref{prop:simple-odd-cycle}, there is an odd
cycle $C^\star$ that is a part of~$W^\star$.
Notice that
$\coste_k(C^\star)\leq \coste_k(W^\star)$.  Because~$W^\star$ is an odd closed
walk of minimum discounted cost, we obtain that~$C^\star$ is an odd cycle in
$G^\star$ of minimum discounted cost $\coste_k(C^\star)=\coste_k(W^\star)$.
Since the minimum value of
$\coste_k(A,B)$ of an $s$-$t$-cut $(A,B)$ of~$G$ is the same as
$\coste_k(C^\star)$, we return the cut
corresponding to the cycle~$C^\star$ in~$G^\star$.  This concludes the
description of the algorithm and its correctness proof.
It is easy to verify that the running is $\Oh(kn^2\log n\log M)$.

\end{proof}

\section{Global \minexp}
\label{sec:globally-most-vital}

This section is devoted to the \emph{global} version of the minimum cut problem with reimbursed edges, namely \minexpshort.
The problem is well-known for the case $k=0$, and the state-of-the-art randomized algorithm is due to \citet{GawrychowskiMW24} and runs in  $\mathcal{O}(m \cdot \log^2 n\cdot \log M)$ time, where $M$ is the maximum edge cost value.

The \emph{multi-criteria} global minimum cut problem, where edges have
multiple weight types that are measured separately, was introduced by
\citet{armon2006multicriteriaglobal} and later improved
by \citet{AissiMR17}.  In this work, we
focus on the two-criteria case and observe that \minexpshort is a
special case of the \textsc{Bicriteria Global Minimum Cut} problem:
Given a graph with two edge weight functions $w_1$ and $w_2$, does there exist a cut $(A,B)$
  of $G$ such that $w_i(E(A,B))\le b_i$ for every $i\in \{1,2\}$?

\begin{proposition}[\citet{AissiMR17}]\label{prop:bicriteria_poly}
    \textsc{Bicriteria Global Minimum Cut} admits a randomized algorithm with $\mathcal{O}(n^3 \cdot \log^4 n\cdot\log\log n\cdot \log M)$ running time, where $M$ is the maximum edge cost value.
\end{proposition}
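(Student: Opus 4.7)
The plan is to extend Karger's randomized framework for global minimum cut to the bicriteria setting, in the spirit of Aissi, Mahjoub, and Queyranne. For any $\lambda\ge 0$, define the combined weight $w_\lambda = w_1 + \lambda w_2$. A bicriteria-feasible cut $C$ (with $w_i(C)\le b_i$ for $i\in\{1,2\}$) satisfies $w_\lambda(C)\le b_1+\lambda b_2$. With the choice $\lambda = b_1/b_2$ (and the degenerate case $b_2 = 0$ handled separately), this gives $w_\lambda(C)\le 2b_1$, so any feasible cut is an approximately minimum $w_\lambda$-cut. The task thus reduces to enumerating a polynomial family of near-minimum $w_\lambda$-cuts for suitably chosen $\lambda$ and testing each against the two original budget constraints.

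For the enumeration, I would first invoke Karger's counting theorem: the number of cuts of weight at most $\alpha$ times the minimum is $O(n^{2\alpha})$, and all such cuts can be produced with high probability by the Karger--Stein recursive random contraction algorithm. A naive implementation with $\alpha=2$ gives $\widetilde{\mathcal{O}}(n^4)$ time, a factor of $n$ too large. To shave this factor, the plan is to use Karger's spanning-tree-packing theorem: a near-maximum fractional spanning-tree packing is computed in $\widetilde{\mathcal{O}}(m+n\log^3 n)$ time, and every $2$-approximate minimum $w_\lambda$-cut ``$2$-respects'' some tree in the packing, i.e., crosses at most $4$ of its edges. For each tree, the bicriteria-feasible cuts among those $2$-respecting it are enumerated by a dynamic-programming sweep that tracks both $w_1$ and $w_2$ simultaneously in $\widetilde{\mathcal{O}}(n^2)$ time per tree. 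Aggregating over $O(n)$ relevant trees, combined with Nagamochi--Ibaraki sparsification to reduce $m$ to $O(n)$ without disturbing the relevant cut structure, yields the target $\widetilde{\mathcal{O}}(n^3\log^4 n\log\log n)$ bound; the additional $\log M$ factor arises from bit-scaling to handle the integer weights up to $M$.

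The main technical obstacle is lifting Karger's single-weight $2$-respecting enumeration to the bicriteria setting. Karger's procedure (and its recent refinement by Gawrychowski--Mozes--Weimann) relies on offline range-minimum queries aggregating edge contributions for a \emph{single} weight function, whereas here one must decide whether some pair of tree edges realizes a cut satisfying both $w_1(C)\le b_1$ and $w_2(C)\le b_2$. This calls for a two-dimensional data structure supporting such feasibility queries in polylogarithmic amortized time with $\widetilde{\mathcal{O}}(n^2)$ per-tree preprocessing, which is the heart of the Aissi--Mahjoub--Queyranne contribution. A secondary subtlety is that feasible cuts not on the lower convex hull of the Pareto frontier may fail to be the Lagrangian-optimum at any single $\lambda$; this is handled by repeating the procedure at $O(\log M)$ discretized $\lambda$ values so that every Pareto-optimal cut appears as a $2$-approximate $w_\lambda$-minimum for at least one of them. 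The overall high-probability guarantee $1-1/\Omega(n)$ finally follows by $O(n\log n)$ independent repetitions of the Karger--Stein subroutine to amplify its inherent $\Omega(1/n^2)$ per-run success rate.
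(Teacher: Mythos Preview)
The paper does not prove this proposition at all: it is quoted as a black-box result from \citet{AissiMR17} and then invoked in the proof of \Cref{thm:global_mincut_p}. There is therefore no ``paper's own proof'' to compare your sketch against; the comparison you were asked to make is vacuous here.

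That said, a few remarks on the sketch itself. First, the attribution is off: the cited result is due to Aissi, Mahjoub, and \emph{Ravi}; you may be conflating it with a related parametric-cut paper involving Queyranne. Second, the sentence ``any feasible cut is an approximately minimum $w_\lambda$-cut'' is not literally true: if the global $w_\lambda$-minimum $\mu$ is tiny, a feasible cut at the budget boundary can be an arbitrarily bad $w_\lambda$-approximation. What \emph{is} true---and what makes the Lagrangian reduction work---is that either $\mu\le b_1$, in which case the $w_\lambda$-minimum cut is itself feasible (since $w_1\le\mu\le b_1$ and $(b_1/b_2)w_2\le\mu\le b_1$), or $\mu>b_1$, in which case every feasible cut has $w_\lambda\le 2b_1<2\mu$ and is genuinely $2$-approximate. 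You should state this dichotomy explicitly; once you do, the later paragraph about discretising $\lambda$ over $O(\log M)$ values to catch non-convex Pareto points becomes unnecessary for the decision problem. Third, your use of ``$2$-respects'' to mean ``crosses at most $4$ tree edges'' is nonstandard; in Karger's terminology a $2$-approximate cut $4$-respects some tree in the packing, and enumerating cuts determined by up to four tree edges is $\Theta(n^4)$ per tree, not $\widetilde{O}(n^2)$. Getting down to $\widetilde{O}(n^3)$ total requires the specific data-structural work in the cited paper, which your sketch gestures at but does not pin down. Finally, the closing sentence about $O(n\log n)$ repetitions of Karger--Stein is out of place, since the algorithm you describe is tree-packing based, not contraction based.
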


We use \Cref{prop:bicriteria_poly} to prove
\Cref{thm:global_mincut_p}.
The basic idea is to consider $m+1$ choices of splitting edges into \emph{cheap} and \emph{expensive} and construct an instance of \textsc{Bicriteria Global Minimum Cut} for each choice.
Before giving the proof, we restate \Cref{thm:global_mincut_p} for the reader's convenience.

\thmglobalcut*

\begin{proof}
    We present an algorithm $\mathcal{A}$ that makes $m+1$ calls to the algorithm of \Cref{prop:bicriteria_poly}.
    Denote the given instance of \minexpshort by $(G,c,k,W)$.
    First, the algorithm finds an ordering $e_1, e_2, \ldots, e_m$ of edges of $G$ such that their costs are non-decreasing: $c(e_1)\le c(e_2)\le\ldots\le c(e_m)$.
    This is done via any sorting algorithm that runs in $\mathcal{O}(m\log m)$ time.

    Then, the algorithm iterates $t$ over values in $\{1,2,\ldots, m,m+1\}$.
    For a fixed choice of $t$, the algorithm constructs an instance $\mathcal{I}_t$ of \textsc{Bicriteria Global Minimum Cut}.
    The instance $\mathcal{I}_t$ is given by $(G,w_1^t, w_2^t, W, k)$.
    The idea is that the first weight function $w_1^t$ corresponds to the original edge costs of cheap edges given by $c$, while the second weight function $w_2^t$ corresponds to the number of reimbursed expensive edges.
    For every $i<t$, the algorithm defines $w_1^t(e_i):=c(e_i)$, while in the same time $w_2^t(e_i):=0$.
    These edges correspond to the edges that are included in the cost.
    For every $i\ge t$, we define $w_2^t(e_i):=1$, while $w_1^t(e_i)$ receives value zero.
    These edges correspond to the reimbursed edges.

    The algorithm $\mathcal{A}$ passes each of the instances $\mathcal{I}_1, \mathcal{I}_2, \ldots, \mathcal{I}_{m+1}$ to the algorithm of \Cref{prop:bicriteria_poly}.
    If at least one of these instances is a yes-instance, $\mathcal{A}$ reports that $(G,c,k,W)$ is a yes-instance.
    Otherwise, $\mathcal{A}$ reports that it is a no-instance.

    The running time upper bound for $\mathcal{A}$ is clear.
    We prove that $\mathcal{A}$ is correct.
    First, suppose $(G,c,k,W)$ is a yes instance of \minexpshort.
    Let $(A,B)$ be the solution to  $(G,c,k,W)$.
    Let $F$ be the subset of expensive edges of $E(A,B)$ that are free, while $P$ is the set of edges that are included in the cost.
    Since $c(e_i) \le c(e_j)$ for each $e_i \in P$, $e_j\in F$, we can assume $i<j$, i.e.\ the edge $e_i$ is always reimbursed \emph{after} the edge $e_j$ for $i<j$.
    We have $|F|\le k$ and $c(P)\le W$ (note that $|F|<k$ only if $|P|=0$).
    Let $t$ be the smallest number in $\{i: e_i\in F\}$, or $m+1$ if $F$ is empty.
    Then the same cut $(A,B)$ is a solution to $\mathcal{I}_t$: \[w_1^t(P)+w_1^t(F)=w_1^t(P)+0=c(P)\le W,\]
    and
    \[w_2^t(P)+w_2^t(F)=0+w_2^t(F)=|F|\le k.\]

    Towards opposite direction, let $\mathcal{I}_t$ be a yes-instance of \textsc{Bicriteria Global Minimum Cut} for some $t \in \{1,\ldots, m+1\}$.
    Let $(A,B)$ be its solution cut.
    Split $E(A,B)$ in $F$ and $P$, $F$ consists of all edges $e\in E(A,B)$ with $w^t_2(e)=1$, and $P$ consists of all other edges in $E(A,B)$.
    By definition of $\mathcal{I}_t$, $c(e)\le c(e')$ for every $e\in P$, $e'\in F$.
    In the same time, $c(P)=w_1^t(E(A,B))\le W$ and $|F|\le k$.
    Consequently, $\coste_{k}(A,B)\leq W$ and $(G,c,k,W)$ is a yes-instance of \minexpshort.
    The proof is complete.
\end{proof}

\section{Conclusion}\label{section:conclustion}

We initiated a comprehensive study of cut problems with discounts and conclude with three open problems.
We showed that on general graphs, \minstexpshort is  \textsf{W[1]}-hard parameterized by $k$  for instances with integer edge costs upper bounded by a polynomial of the size of the input graph. Does the problem become fixed-parameter tractable \textsf{FPT} when the edge costs are constants? Is  the problem \textsf{FPT} when parameterized by combined parameters $k$ and the maximum edge cost?
Our polynomial time algorithm for \maxcheapshort on graphs of bounded genus relies on the assumption that the costs are polynomial. Is the problem polynomial time solvable on planar graphs for arbitrary weights?

\end{document}